\crefname{hypothesis}{Hypothesis}{Hypotheses}
\title{Multiscale analysis of spectral broadening of acoustic waves by a turbulent shear layer
\thanks{Submitted to the editors \today.
\funding{This work was partly funded by DGA/DS Mission pour la Recherche et l'Innovation Scientifique under grant 2015-60-0060.}}
}
\author{Josselin Garnier\thanks{Centre de Math\'ematiques Appliqu\'ees, \'Ecole Polytechnique, FR-91128 Palaiseau cedex, France 
  (\email{josselin.garnier@polytechnique.edu}).}
 \and Etienne Gay\thanks{ONERA/DAAA, Universit\'e Paris-Saclay, FR-92322 Ch\^atillon, France 
  (\email{etienne.gay@vo2-group.com}).}
\and \'Eric Savin\thanks{ONERA/DTIS, Universit\'e Paris-Saclay, FR-91123 Palaiseau, France 
  (\email{eric.savin@onera.fr}).}}
\newcommand{\Lref}[1]{Lemma~\ref{#1}}  % Lemma
\newcommand{\Pref}[1]{Prop.~\ref{#1}}  % Proposition
\newcommand{\fref}[1]{Fig.~\ref{#1}}  % Figures
\newcommand{\eref}[1]{Eq.~\eqref{#1}} % Equations
\newcommand{\sref}[1]{Sect.~\ref{#1}} % Sections
\newcommand{\ci}{\mathrm{i}}
\newcommand{\rj}{r}
\newcommand{\xj}{x}
\newcommand{\zj}{z}
\newcommand{\zpj}{\zj^\prime}
\newcommand{\zuj}{\zj_1}
\newcommand{\zdj}{\zj_2}
\newcommand{\zupj}{\zj_1^\prime}
\newcommand{\zdpj}{\zj_2^\prime}
\newcommand{\rv}{{\boldsymbol\rj}}
\newcommand{\xv}{{\boldsymbol\xj}}
\newcommand{\kxj}{\kappa}
\newcommand{\kx}{{\boldsymbol\kxj}}
\newcommand{\ksp}{\kx_\text{sp}}
\newcommand{\kspu}{\kx_\text{1,sp}}
\newcommand{\kspd}{\kx_\text{2,sp}}
\newcommand{\kp}{\kx^\prime}
\newcommand{\ku}{\kx_1}
\newcommand{\kup}{\kx_1^\prime}
\newcommand{\kd}{\kx_2}
\newcommand{\kdp}{\kx_2^\prime}
\newcommand{\is}{s}
\newcommand{\zjs}{\zj_\is}
\newcommand{\xvs}{\xv_\is}
\newcommand{\om}{\omega}
\newcommand{\omref}{\om_\iref}
\newcommand{\omp}{{\om^\prime}}
\newcommand{\omu}{{\om_1}}
\newcommand{\omd}{{\om_2}}
\newcommand{\omup}{{\om_1^\prime}}
\newcommand{\omdp}{{\om_2^\prime}}
\newcommand{\qj}{q}
\newcommand{\roi}{\varrho}
\newcommand{\pres}{p}
\newcommand{\Pres}{P}
\newcommand{\cel}{c}
\newcommand{\comp}{K}
\newcommand{\rref}{\roi_\iref}
\newcommand{\pref}{\pres_\iref}
\newcommand{\celref}{\cel_\iref}
\newcommand{\compref}{\comp_\iref}
\newcommand{\pv}{{\boldsymbol\pres}}
\newcommand{\Pv}{{\boldsymbol\Pres}}
\newcommand{\MMref}{{\boldsymbol M}_\iref}
\newcommand{\uj}{u}
\newcommand{\vj}{v}
\newcommand{\ug}{{\boldsymbol\uj}}
\newcommand{\vg}{{\boldsymbol\vj}}
\newcommand{\vpertj}{V}
\newcommand{\vpert}{\boldsymbol\vpertj}
\newcommand{\vref}{\vg_\iref}
\newcommand{\vm}{\underline{\vg}}
\newcommand{\vstj}{{v}_{\text{t}}}
\newcommand{\vst}{\boldsymbol{\vstj}}
\newcommand{\Mach}{M}
\newcommand{\Macht}{\Mach_\text{t}}
\newcommand{\fj}{f}
\newcommand{\hj}{h}
\newcommand{\fv}{\boldsymbol{\fj}}
\newcommand{\Fj}{F}
\newcommand{\Hj}{H}
\newcommand{\Fv}{\boldsymbol{\Fj}}
\newcommand{\dr}{\partial}
\newcommand{\dd}{{\mathrm d}}
\newcommand{\dconv}[1]{\frac{\dd{#1}}{\dd t}}
\newcommand{\bnabla}{{\boldsymbol\nabla}}
\newcommand{\nabk}{\bnabla_\kx}
\newcommand{\Dx}{{\bf D}}
\newcommand{\di}{d}
\newcommand{\Hessian}{{\bf H}}
\newcommand{\TF}[1]{\widehat{#1}}
\newcommand{\TFk}[1]{\check{#1}}
\newcommand{\cjg}[1]{\overline{#1}}
\newcommand{\norm}[1]{|#1|}
\newcommand{\normu}[1]{\left|#1\right|}
\newcommand{\esp}[1]{{\mathbb E}\left\{#1\right\}}
\newcommand{\bet}{\beta}
\newcommand{\zet}{\zeta}
\newcommand{\BS}[1]{\boldsymbol{#1}}
\newcommand{\phase}{\phi}
\newcommand{\slowness}{\sigma}
\newcommand{\convol}{*}
\newcommand{\Id}{{\boldsymbol I}}
\newcommand{\bzero}{{\bf 0}}	
\newcommand{\iexp}{\operatorname{e}}
\newcommand{\trace}{\operatorname{Tr}}		% trace
\newcommand{\demi}{\frac{1}{2}}
\newcommand{\iref}{0}
\newcommand{\scale}{\varepsilon}
\newcommand{\dirac}{\delta}                       % Dirac
\newcommand{\Heaviside}{\operatorname{H}}                       % Dirac
\newcommand{\Go}{\operatorname{O}}
\newcommand{\po}{\mathrm{o}}
\newcommand{\PSDj}{\Sigma}
\newcommand{\PSD}{{\boldsymbol\PSDj}}
\newcommand{\ACFj}{R}
\newcommand{\ACF}{{\boldsymbol\ACFj}}
\newcommand{\indic}{\mathds{1}}
\newcommand{\Rset}{\mathbb{R}}			% Reels
\newcommand{\Kj}{K}
\newcommand{\Dj}{D}
\newcommand{\Kls}{{\boldsymbol\Kj}}
\newcommand{\Dls}{{\boldsymbol\Dj}}
\newcommand{\Kms}{\Kls_\iref}
\newcommand{\fls}{\boldsymbol{s}}
\newcommand{\FLS}{\boldsymbol{S}}
\newcommand{\FLSg}{{\mathcal S}}
\newcommand{\Gr}{G}
\newcommand{\Grv}{{\boldsymbol\Gr}}
\newcommand{\Grvref}{\Grv_\iref}
\newcommand{\Grvrefm}{\Grv_\iref^+}
\newcommand{\Grvrefd}{\Grv_\iref^-}
\newcommand{\tfGrvrefm}{\TF{\Grv}_\iref^+}
\newcommand{\tfGrvrefd}{\TF{\Grv}_\iref^-}
\newcommand{\gr}{g}
\newcommand{\grv}{{\boldsymbol\gr}}
\newcommand{\tfgrvrefm}{{\TF{\grv}_\iref^+}}
\newcommand{\tfgrvrefd}{{\TF{\grv}_\iref^-}}
\newcommand{\kernel}{{\mathcal K}}
\newcommand{\dernel}{{\mathcal D}}
\newcommand{\kernelw}{\TF{\kernel}}
\newcommand{\dernelw}{\TF{\dernel}}
\newcommand{\kernelv}{\TF{\boldsymbol k}}
\newcommand{\cernelv}{\TF{\boldsymbol c}}
\newcommand{\dernelv}{\TF{\boldsymbol d}}
\newcommand{\kernelturb}{{\mathscr K}}
\newcommand{\islow}{\text{sl}}
\newcommand{\ifast}{\text{f}}
\newcommand{\iscat}{1}
\newcommand{\itr}{{\sf T}}
\newcommand{\Sj}{S}
\newcommand{\rightm}{a}
\newcommand{\leftm}{b}
\newcommand{\IO}{I_\iref}
\newcommand{\SBj}{\Psi}
\newcommand{\SB}{{\bf \SBj}}
\newcommand{\HSB}{\TF{\SB}}
\newcommand{\dist}{d}
\newcommand{\distu}{\dist_1}
\newcommand{\distd}{\dist_2}
\newcommand{\alert}[1]{\textcolor{black}{#1}}
\begin{document}

\maketitle

% REQUIRED
\begin{abstract}
We consider the scattering of acoustic waves emitted by an active source above a plane turbulent shear layer. The layer is modeled by a moving random medium with small spatial and temporal fluctuations of its mean velocity, and constant density and speed of sound. We develop a multi-scale perturbative analysis for the acoustic pressure field transmitted by the layer and derive its power spectral density when the correlation function of the velocity fluctuations is known. Our aim is to compare the proposed analytical model with some experimental results obtained for jet flows in open wind tunnels. We start with the Euler equations for an ideal fluid flow and linearize them about an ambient, unsteady inhomogeneous flow. We study the transmitted pressure field without fluctuations of the ambient flow velocity to obtain the Green's \alert{function} of the unperturbed medium with constant characteristics. Then we use a Lippmann-Schwinger equation to derive an analytical expression of the transmitted pressure field, as a function of the velocity fluctuations within the layer. Its power spectral density is subsequently computed invoking a stationary-phase argument, assuming in addition that the source is time-harmonic and the layer is thin. We finally study the influence of the source tone frequency and ambient flow velocity on the power spectral density of the transmitted pressure field and compare our results with other analytical models and experimental data.
\end{abstract}

% REQUIRED
\begin{keywords}
Aeroacoustics, shear layer, scattering, spectral broadening. 
\end{keywords}

\section{Introduction}\label{sec:Intro}

Wave propagation in random media has been extensively stu\-died in the literature. 
When the random inhomogeneities are small it can be analyzed by perturbation techniques. 
The most efficient approaches are based on techniques of separation of scales as introduced by \emph{e.g.} Papanicolaou and his coauthors \cite{ASC91}. It is found that the coherent (mean) wave amplitude decreases with the distance traveled, since wave energy is converted to incoherent fluctuations. The coherent wave experiences a deterministic spreading and a random time shift. These phenomena were originally described by O'Doherty and Anstey in a geophysical context \cite{ODA71}, and were studied mathematically in \cite{CLO94,ALF04}. The incoherent wave intensity can be calculated approximatively from a transport equation which has the form of a linear radiative transport equation \cite{ISH78,RYZ96}.

\alert{Research} on wave propagation in randomly heterogeneous media \alert{has} long been motivated by communication and imaging problems. Geosciences were the first to take an interest in imaging in random media, typically source and reflector localization \cite{BUR89,ASC91}. The objective was to better understand the interactions of waves with the medium in order to design and implement efficient imaging techniques. As the coherent wave decays exponentially with the propagation distance, the relevant information is carried by the covariance function or second-order moment of the incoherent wave field. It turns out that this covariance function was also intensively studied for the interpretation and analysis of time-reversal experiments \cite{FIN99,FOU05,FOU07}. As a result the analysis and understanding of the correlation properties of incoherent wave fields has made significant progress in the recent years and is now the basis of many modern correlation-based imaging techniques \cite{BOR03,BOR11,GAR16}.

This interest is not limited to a fixed environment, since in multiple fields the propagation of waves in a moving heterogeneous medium is an important phenomenon. In aeroacoustics for example, tests are performed in open jet wind tunnel in order to measure the acoustic signature of an active source. In these experiments acoustic waves are transmitted by some device inside the flow, and then received by microphones outside the flow. The velocity difference between the jet and the test chamber induces a turbulent shear layer, and therefore the acoustic waves interact with it and undergo several effects, such as phase and amplitude modulation. Their energy is spread over the frequency band--this is so-called spectral broadening or haystacking effect. Interactions between acoustic waves and turbulent shear layers have been studied in the literature. Candel {\it et al.} \cite{CAN75,CAN76a,CAN76b} for example conducted in the seventies a series of experiments which primarily motivated this research. These authors were particularly interested in the power spectral density (PSD) of the pressure field transmitted through the shear layer. 
Indeed, beyond its ballistic part, \alert{\emph{i.e.} the direct unscattered wave component}, the transmitted pressure field has multiply scattered components that result from the interactions with the large turbulent eddies of the shear layer.
Some properties of the shape of the induced PSD were deduced, in particular for the two lobes observed on each side of the central peak located at the emission frequency of the source. The earlier observations made by Candel {\it et al.} \cite{CAN75,CAN76a,CAN76b} on this spectral broadening effect were the starting point of more recent experiments trying to reproduce it, as for example Kr\" ober {\it et al}. \cite{KRO13} or Sijtsma {\it et al}. \cite{SIJ14}.

Analytical models for the scattered pressure field transmitted by a shear layer and its PSD were developed in \cite{GUE85,GOE01} for example. They were based on Lighthill's wave equation \cite[Section 2.2]{GOL76} derived in \cite{LIG52} with secondary source terms induced by the interaction of an incident wave field with turbulent velocity fluctuations. In \cite{CAM78a,CAM78b} mean-flow refraction effects were considered by modeling the shear layer as an oscillating vortex sheet, and assuming that turbulence induces a random phase modulation of the field. Regarding computational experiments, Ewert \textit{et al.} \cite{EWE08} numerically simulated the propagation of acoustic waves in a shear layer with a mean velocity gradient and turbulent velocity fluctuations considering the linearized Euler's equations and Pierce's convected wave equation \cite{PIE90}. The turbulent velocity fluctuations are numerically synthesized by filtering a white noise to impose prescribed statistical properties to the turbulent velocity. They are then added to the steady mean flow to form an unsteady ambient flow around which the Euler's equations are linearized. They concurrently developed an analytical model of acoustic wave scattering based on a perturbative analysis of Lilley's equation \cite[Section 1.2]{GOL76} derived in \cite{LIL72},  which was further detailed in \cite{MCA13,MCA16,POW11}. A key aspect of both approaches is a proper choice of the correlation function of the turbulent velocity fluctuations which is used to describe statistically the turbulence in the shear layer. In these works, the PSD of the acoustic pressure field transmitted by the shear layer shows the two sidebands around the emission frequency mentioned above, but some essential features fail to be reproduced by numerical simulations. Clair and Gabard \cite{CLA16} performed numerical simulations of acoustic wave scattering by a turbulent shear layer by the same methodology. A uniform mean flow was chosen. They studied in particular the influence of the source frequency, the turbulence convection velocity, and the direction of the observation \alert{(the angle between the line from the source location to the measurement location and the mean flow direction)} on the shape of the PSD of the scattered pressure. The turbulent zone is typically seen as a single large scale eddy by low frequency acoustic waves, as for scattering by a single vortex. As the frequency of the source increases, the finer turbulence eddies play an increased role in the acoustic scattering mechanism. More recently, Bennaceur {\it et al.} \cite{BEN16} reproduced numerically the spectral broadening effect by Large Eddy Simulations (LES) of the unsteady Navier-Stokes equations. Inflow velocity perturbations were added to trigger the shear layer transition to a turbulent state earlier in the simulation. The authors ended up observing a good agreement between the simulated PSD of the far-field transmitted pressure and the foregoing experiments.

\alert{In this context we propose to carry out a mathematical study of the pressure field transmitted through a plane turbulent shear layer.
The main objective of this analysis is 
to determine a simple model from first principles that gives quantitive predictions about the transmitted pressure field
in terms of the various parameters of the problem (jet velocity, correlation time and standard deviation of the velocity fluctuations, tone frequency, \emph{etc}.) and
that allows to reproduce and explain the experimental observations,
in particular, that gives an analytical formula of the PSD.
This paper extends to random flows the analysis of wave propagation phenomena in randomly stratified motionless media developed in \cite{FOU07}
and it identifies the main ingredients that are necessary to explain the observations about the spectral broadening reported in the literature.}

The paper is organized as follows.
Starting from Euler's equations, we linearize them about an ambient, unsteady inhomogeneous flow in \sref{sec:LEE}. \alert{This linearization step is essential in identifying the acoustic contributions to the overall flow and in deriving their auto-correlation properties}. In \sref{sec:no:pert} an ambient flow with constant density, speed of sound, and velocity is considered so that \alert{the expression of the Green's function of this ambient flow can be determined}. An analytical expression of the pressure field transmitted by the flow is also \alert{derived}. Then in \sref{sec:pert} we use a Lippmann-Schwinger equation to derive the pressure field transmitted by the flow with small spatial and temporal fluctuations of the ambient velocity, invoking a Born-like (or single-scattering) approximation. The PSD of the transmitted pressure is also computed in order to be able to compare the proposed model with other analytical or experimental approaches. This is the main result of the paper, which is convened in \Pref{Prop:PSD:p2}. To do so we develop a multiple scale analysis in \sref{sec:PSD} and use the stationary-phase theorem in \sref{sec:sta:pha}. A numerical example is outlined in \sref{sec:num}, before a summary of this research is finally given in \sref{sec:summary}.

\section{Linearized Euler equations about an unsteady inhomogeneous flow}\label{sec:LEE}

The full non-linear Euler equations for an ideal fluid flow in the absence of friction, heat conduction, or heat production are:
\begin{equation}\label{eq:Euler}
\begin{split}
\dconv{\roi} &+\roi\bnabla\cdot\vg=0\,, \\
\dconv{\vg} &+\frac{1}{\roi}\bnabla\pres=\bzero\,, \\
\dconv{s} &=0\,,
\end{split}
\end{equation}
where $\roi$ is the fluid density, $\vg$ is the particle velocity, $s$ is the specific entropy, and $\pres$ is the thermodynamic pressure given by the equation of state \alert{$\pres=\pres^\#(\roi,s)$, where $\pres^\#$ is a  function of the density and specific entropy}. It arises from the thermodynamic equilibrium of the fluid so it depends on its density and temperature, or entropy. Also:
\begin{equation}
\dconv{}=\frac{\dr}{\dr t}+\vg\cdot\bnabla
\end{equation}
is the usual convective derivative following the particle paths.

\subsection{Linearization of Euler equations about an ambient flow}

Linearized acoustics equations arise from the previous conservation equations when their variables are expressed as sums of ambient quantities pertaining to the ambient flow (subscript $\iref$), and lower-order acoustic fluctuations (primed quantities):
\begin{equation}
\begin{split}
\roi(\rv,t) &=\rref(\rv,t)+\roi'(\rv,t)\,, \\
\vg(\rv,t) & = \vref(\rv,t)+\vg'(\rv,t)\,, \\
s(\rv,t) & =s_\iref(\rv,t)+s'(\rv,t)\,,\\ 
\pres(\rv,t) & = \pref(\rv,t)+\pres'(\rv,t)\,,
\end{split}
\end{equation}
where $\rv=(\xv,\zj)$ is the position within the flow, and $\xv$ is the horizontal coordinates and $\zj$ is the vertical coordinate. In such a manner, the ambient quantities satisfy the Euler equations (\ref{eq:Euler}):
\begin{equation}\label{eq:ambient-cons}
\begin{split}
\dconv{\rref} &+\rref\bnabla\cdot\vref=0\,, \\
\dconv{\vref} &+\frac{1}{\rref}\bnabla\pref=\bzero\,, \\
\dconv{s_\iref} &=0\,,
\end{split}
\end{equation}
together with the following relations from the equation of state applicable to the ambient flow $\smash{\pref=\pres^\#(\rref,s_\iref)}$:
\begin{equation}\label{eq:ambient-state}
\begin{split}
\bnabla\pref &=\celref^2\bnabla\rref+\left(\frac{\dr\pres^\#}{\dr s}\right)_{\rref}\bnabla s_\iref\,, \\
\dconv{\pref} &=\celref^2\dconv{\rref}\,,
\end{split}
\end{equation}
where $\celref>0$ is the speed of sound in the ambient flow, 
$\smash{\celref^2}=\smash{( \frac{\dr\pres^\#}{\dr \roi})_{s_\iref}}$. Here and from now on one has redefined the convective derivative as:
\begin{equation}\label{eq:dconv0}
\dconv{}=\frac{\dr}{\dr t}+\vref\cdot\bnabla\,,
\end{equation}
that is, the convective derivative within the ambient flow. The primed quantities satisfy the linearized Euler equations:
\begin{equation}
\begin{split}
\dconv{\roi'} &+\roi'\bnabla\cdot\vref+\bnabla\cdot(\rref\vg')=0\,, \\
\dconv{\vg'} &+\vg'\cdot\bnabla\vref+\frac{1}{\rref}\bnabla\pres'-\frac{\roi'}{\rref^2}\bnabla\pref=\bzero\,, \\
\dconv{s'} &+\vg'\cdot\bnabla s_\iref=0\,.
\end{split}
\end{equation}
In addition, one has from the linearized equation of state:
\begin{equation}\label{eq:prime-state}
\pres'=\celref^2\roi'+\left(\frac{\dr\pres^\#}{\dr s}\right)_{\rref} s'\,.
\end{equation}

Combining \eref{eq:ambient-cons}, \eref{eq:ambient-state}, and \eref{eq:prime-state} we arrive at:
\begin{equation}
\begin{split}
\dconv{}\left(\frac{\pres'}{\rref\celref^2}\right)+\frac{1}{\rref\celref^2}\vg'\cdot\bnabla\pref+\bnabla\cdot\vg'-s'\dconv{}\left[\frac{1}{\rref\celref^2}\left(\frac{\dr\pres^\#}{\dr s}\right)_{\rref}\right] &=0\,, \\
\dconv{\vg'}+\vg'\cdot\bnabla\vref+\frac{1}{\rref}\bnabla\pres'-\frac{\pres'}{(\rref\celref)^2}\bnabla\pref-\frac{s'}{(\rref\celref)^2}\left(\frac{\dr\pres^\#}{\dr s}\right)_{\rref}\bnabla\pref &=\bzero\,, \\
\dconv{s'}+\vg'\cdot\bnabla s_\iref &=0\,.
\end{split}
\end{equation}
Discarding the last terms in $s'$ in these first two equations because they are of second order by the arguments devised in \cite{PIE90}, finally yields the system \cite{FAN01} (the last term in the second equation below is apparently missing in \cite[Eq. (2)]{FAN01})\string:
\begin{equation}\label{eq:LEEpv}
\begin{split}
\dconv{}\left(\frac{\pres'}{\compref}\right)+\bnabla\cdot\vg' +\frac{1}{\compref}\vg'\cdot\bnabla\pref &=0\,, \\
\dconv{\vg'}+\frac{1}{\rref}\bnabla\pres'+\vg'\cdot\bnabla\vref-\frac{\pres'}{\rref\compref}\bnabla\pref &=\bzero\,,
\end{split}
\end{equation}
with $\compref=\smash{\rref\celref^2}$. In terms of the unknowns $\qj:=\smash{\frac{\pres'}{\compref}}$ (dimensionless pressure) and $\ug:=\rref\vg'$ (momentum) it reads:
\begin{equation}\label{eq:LEEqu}
\begin{split}
\dconv{\qj}+\frac{1}{\rref}\bnabla\cdot\ug+\frac{1}{\rref\compref}\left(\frac{\dr\pres^\#}{\dr s}\right)_{\rref} \ug\cdot\nabla s_\iref &=0\,, \\
\dconv{\ug}+\compref\bnabla\qj+ (\trace(\Dx\vref)\Id_3+\Dx\vref)\ug + \left(\bnabla\compref-\bnabla\pref\right)\qj &=\bzero\,.
\end{split}
\end{equation}
Here $\smash{\Id}_3$ is the $3\times 3$ identity matrix, $\smash{\Dx\vref}$ stands for the velocity strain matrix within the ambient flow, such that $\smash{(\Dx\vref)_{jk}=\frac{\partial\vj_{\iref j}}{\partial\rj_k}}$. Consequently $\smash{(\Dx\vref)\ug}=\smash{\ug\cdot\bnabla\vref}$ as usually noted in fluid mechanics textbooks, and $\smash{\trace(\Dx\vref)}=\smash{\bnabla\cdot\vref}$. 
{In the experiments \cite{CAN75,CAN76a,CAN76b,KRO13,SIJ14} \alert{referred to} in \sref{sec:Intro} where no combustion occurs and temperature variations are small, the (cold) jet flow can be considered as a thermally and calorically perfect gas, whereby its heat capacities at constant volume and constant pressure $\smash{c_v}$ and $\smash{c_p}$, respectively, are constant. Therefore by the equation of state of a perfect gas $\smash{\celref^2=\gamma\frac{\pref}{\rref}}$ where $\gamma=\smash{\frac{c_p}{c_v}}$ is Laplace's coefficient,} and the above system reduces to:
\begin{equation}
\begin{split}
\dconv{\qj}+\frac{1}{\rref}\bnabla\cdot\ug &=0\,, \\
\dconv{\ug}+\compref\bnabla\qj &+ (\trace(\Dx\vref)\Id_3+\Dx\vref)\ug + \qj(\gamma-1)\bnabla\pref=\bzero\,.
\end{split}
\end{equation}
Combining the latter with \eref{eq:ambient-cons} we obtain\string:
\begin{equation}\label{Eq:lin:Eul}
\begin{split}
\dconv{\qj} &+\frac{1}{\rref}\bnabla\cdot\ug=0\,, \\
\dconv{\ug} &+\compref\bnabla\qj + (\trace(\Dx\vref)\Id_3+\Dx\vref)\ug - \qj(\gamma-1)\rref\dconv{\vref}=\bzero\,.
\end{split}
\end{equation}

\subsection{Model of the ambient flow velocity}

We define a fluctuation model of the ambient flow velocity as follows\string:
\begin{equation}\label{eq:def-flow}
\vref(t,\rv) = \left\lbrace \begin{tabular}{l l}
$\vm + \scale \vpert(t,\xv,\zj)$ & $\zj \,\in [-L,0]\,,$ \\
$\vm$ & $\text{elsewhere}\,.$ \\
\end{tabular}\right.
\end{equation}
Here $L>0$ is the width of the random flow, $\vm$ is the constant ambient flow velocity which is parallel to the horizontal coordinates $\xv$ of $\rv=(\xv,\zj)$, and $0\leq\scale\ll1$ is a small parameter which scales the amplitude of its fluctuations $\vpert$. This parameter is often called turbulence intensity in the dedicated literature, as it is related to the components of the turbulent kinetic energy. The latter are generally different for each direction \cite{CAN75,KRO13,SIJ14}, but it is assumed in the above model that such intensities are comparable for all directions. The fluctuations of the constant ambient flow velocity in \eref{eq:def-flow} are given by the mean-square stationary random vector $(\vpert(t,\rv);\,t\in\Rset,\rv\in\Rset^3)$ with zero mean:
\begin{equation}\label{eq:m_V}
\esp{\vpert(t,\xv,\zj)}=\bzero\,.
\end{equation}
Its auto-correlation matrix function reads:
\begin{multline}
\label{eq:R_V}
\esp{ \vpert(t_1,\rv_1)\otimes\vpert(t_2,\rv_2) }=\ACF_{\vpert}(t_1-t_2,\rv_1-\rv_2) \\
= \ACF(t_1 - t_2) \dirac(\xv_1 -\xv_2-(t_1-t_2)\vst)\dirac(\zuj-\zdj) \frac{\indic_{[-L,0]}(\zuj)}{L}\,,
\end{multline}
where $\smash{\vst}$ is the horizontal convection velocity of the turbulent structures (eddies) in the shear layer, and $t\mapsto\ACF(t)$ is a $3\times 3$ matrix function which describes the auto-correlation in time. Here $\zj\mapsto\smash{\indic_I(\zj)}$ is the characteristic function of the set $I$ such that $\smash{\indic_I(\zj)}=1$ if $\zj\in I$, and $0$ otherwise; and $\BS{a}\otimes\BS{b}$ stands for the usual tensor product of vectors $\BS{a}$ and $\BS{b}$ such that $\smash{(\BS{a}\otimes\BS{b})_{ij}}=\smash{a_i b_j}$ in Cartesian coordinates.
\alert{The layer is supported in the region $z \in  [-L,0]$ with $-L \leq 0$.
This model means that the fluctuations $\vpert$ are delta-correlated in space in the local frame moving at the turbulent velocity $\smash{\vst}$.
The delta function can be seen as the limit of a smooth (for instance, Gaussian) auto-correlation function when the correlation length is smaller than the acoustic wavelength.}

\begin{figure}[h!]
\centering{\includegraphics[scale=0.4]{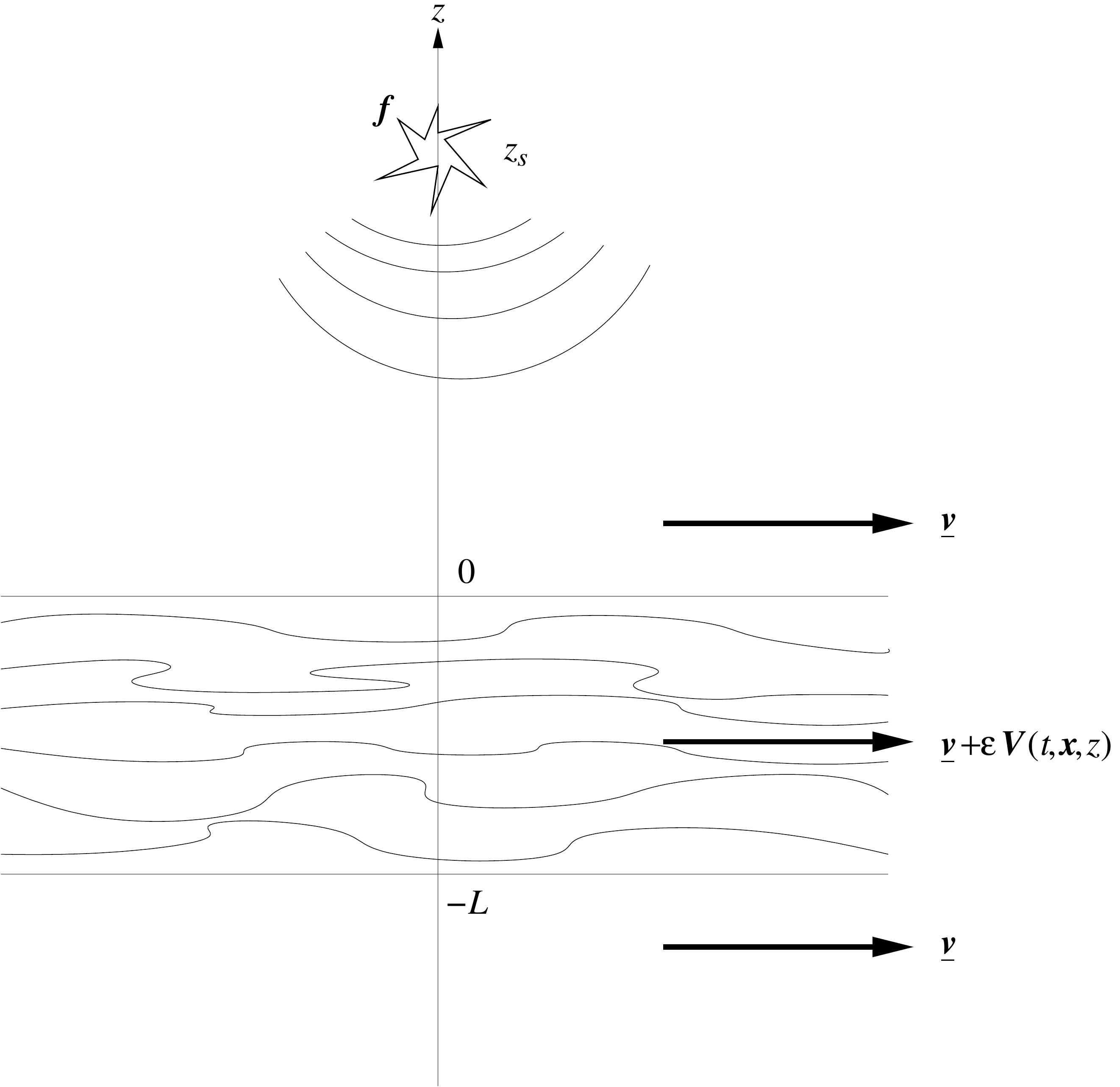}}
\caption{\textit{Acoustic waves in an homogeneous or random flow of typical thickness $L$ and ambient velocity $\vm$. The emitting sources $\fv$ are centered at some $\zjs\geq 0$.}}
\label{fg:state}
\end{figure}

\section{Transmitted fields without fluctuation of the ambient flow velocity} \label{sec:no:pert}

We start by considering the problem (\ref{Eq:lin:Eul}) without fluctuation of the ambient flow velocity. In this section we 
give the expression of the Green's function for this problem. Thus, we consider general source terms $\fls=(\hj,\fv)$ on the right hand-side of (\ref{Eq:lin:Eul}) with $\scale=0$, \emph{c.f.} (\ref{eq:def-flow}), as follows\string:
\begin{equation}\label{eq:syst:no:pert}
\begin{split}
\dconv{\qj}+\frac{1}{\rref}\bnabla\cdot\ug &=\frac{\hj}{\compref}\,, \\
\dconv{\ug}+\compref\bnabla\qj &=\rref \fv \,.
\end{split}
\end{equation}
\alert{Here $ {\hj}/{\celref^2}$ (remember that $\compref= \rref\celref^2$) corresponds to a specific mass flow rate having dimension of a density per second, and $\fv$ corresponds to an acceleration exerted on the fluid}. The convective derivative (\ref{eq:dconv0}) with the model (\ref{eq:def-flow}) is:
\begin{equation}\label{eq:dconvm}
\dconv{}=\frac{\dr}{\dr t}+\vm\cdot\bnabla\,.
\end{equation}
For the time being we consider the following specific Fourier transform and its inverse with respect to the time $t$ and the horizontal spatial coordinates $\xv$\string:
\begin{equation}\label{eq:FFTh}
\TF{\tau}(\om,\kx,\zj) =  \iint \iexp^{\ci\om(t -  \kx\cdot\xv)} \tau(t,\xv,\zj)\,\di t\di\xv\,,
\end{equation}
and\string:
\begin{equation}\label{eq:FFTh-inv}
\tau(t,\xv,\zj) = \frac{1}{(2\pi)^3}  \iint\iexp^{-\ci\om(t - \kx\cdot\xv)}\TF{\tau}(\om,\kx,\zj)\om^2 \, \di\om\di\kx\,.
\end{equation}
We note from this definition that $\kx$ is thus homogeneous to an inverse speed, or slowness. Therefore it is referred to as the (horizontal) slowness vector. Also the inverse Fourier transform with respect to that slowness vector solely will be considered in the subsequent analysis\string:
\begin{equation}\label{eq:FFTk}
\TFk{\tau}(\om,\xv,\zj)=\frac{1}{(2\pi)^2}  \int\iexp^{\ci\om\kx\cdot\xv}\TF{\tau}(\om,\kx,\zj)\,\om^2\di\kx\,.
\end{equation}
Taking the Fourier transform (\ref{eq:FFTh}) of the system (\ref{eq:syst:no:pert}) yields\string:
\begin{equation}\label{eq:FFT:syst:no:pert}
\begin{split}
-\ci\om\bet(\kx)\TF{\qj}+\frac{\ci\om}{\rref}\kx\cdot\TF{\ug}_\xv+\frac{1}{\rref}\frac{\partial\TF{\uj}_\zj}{\partial\zj} &= \frac{\TF{\hj}}{\compref}\,, \\
-\ci\om\bet(\kx)\TF{\ug}_\xv+\ci\om\compref\TF{\qj}\kx &= \rref\TF{\fv}_\xv\,, \\
-\ci\om\TF{\uj}_\zj+\compref\frac{\partial\TF{\qj}}{\partial\zj} &=\rref\TF{\fj}_\zj\,,
\end{split}
\end{equation}
with the definition ($\vm=\smash{(\vm_\xv,0)}$):
\begin{equation}\label{eq:def-beta}
\bet(\kx):=1-\kx\cdot\vm_\xv
\end{equation}
and $\smash{\TF{\fv}}=(\smash{\TF{\fv}_\xv},\smash{\TF{\fj}_\zj})$. Eliminating $\smash{\TF{\ug}_\xv}$ in the above one obtains\string:
\begin{equation}\label{eq:mean_flow_celerity}
\begin{split}
\frac{\partial\TF{\uj}_\zj}{\partial\zj}  &=\ci\om\compref\zet(\kx)^2\TF{\qj}+\frac{1}{\celref^2}\TF{\hj}+\frac{\rref}{\bet(\kx)}\kx\cdot\TF{\fv}_\xv\,, \\
\frac{\partial\TF{\qj}}{\partial\zj} &= \frac{\ci\om}{\compref}\TF{\uj}_\zj+\frac{1}{\celref^2}\TF{\fj}_\zj\,,
\end{split}
\end{equation}
where for $\bet(\kx)\neq0$\string:
\begin{equation}
\label{eq:zetm-bis}
\zet(\kx) = \sqrt{\frac{\bet(\kx)}{\celref^2} - \frac{\norm{\kx}^2}{\bet(\kx)}}
\end{equation}
is homogeneous to a slowness. 
We have thus arrived at a system of ordinary differential equations for the scalar unknowns $\smash{(\TF{\qj},\TF{\uj}_\zj)}$. Its properties may be first analyzed by considering the homogeneous differential system:
\begin{equation}
\frac{\partial}{\partial\zj}\begin{pmatrix}\TF{\qj}\\\TF{\uj}_\zj\end{pmatrix}=\ci\om\zet\begin{bmatrix} 0 & \frac{1}{\compref\zet} \\ \compref\zet & 0\end{bmatrix}\begin{pmatrix}\TF{\qj}\\\TF{\uj}_\zj\end{pmatrix}\,.
\end{equation}
Let $\smash{\IO(\kx)}=\smash{\compref\zet(\kx)}$ be the acoustic impedance and\string:
\begin{equation}
\MMref=\begin{bmatrix} \IO^\demi & \IO^{-\demi} \\ \IO^\demi & -\IO^{-\demi}\end{bmatrix}\,,\quad\begin{bmatrix} 0 & \IO^{-1} \\ \IO & 0 \end{bmatrix}=\MMref^{-1}\begin{bmatrix} 1 & 0 \\ 0 & -1 \end{bmatrix}\MMref\,.
\end{equation}
\alert{Then} the foregoing homogeneous system is diagonalized as\string:
\begin{equation}
\frac{\partial}{\partial\zj}\MMref\begin{pmatrix}\TF{\qj}\\\TF{\uj}_\zj\end{pmatrix}=\ci\om\zet\begin{bmatrix} 1 & 0 \\ 0 & -1 \end{bmatrix}\MMref\begin{pmatrix}\TF{\qj}\\\TF{\uj}_\zj\end{pmatrix}\,.
\end{equation}
\alert{This} introduces the upward and downward wave mode amplitudes $\rightm$ and $\leftm$ defined such that\string:
\begin{equation}\label{eq:wave_modes}
\MMref\begin{pmatrix}\TF{\qj}\\\TF{\uj}_\zj\end{pmatrix}=\begin{bmatrix} \iexp^{+\ci\om\zet(\kx)\zj} & 0 \\ 0 & \iexp^{-\ci\om\zet(\kx)\zj} \end{bmatrix}\begin{pmatrix}\rightm\\ \leftm\end{pmatrix}\,.
\end{equation}
The latter then satisfy\string:
\begin{equation}\label{eq:ODE_mode_hom}
\frac{\partial}{\partial\zj}\begin{pmatrix}\rightm\\ \leftm\end{pmatrix}=\bzero\,,
\end{equation}
which means that these wave mode amplitudes are independent of $\zj$ away from the source position. 
When $\zet(\kx)$ is real the wave modes are propagating,
$\rightm$ is the amplitude of the upward propagating mode,
and $\leftm$ is the amplitude of the downward propagating mode.
When $\zet(\kx)$ is imaginary the wave modes are evanescent (\emph{i.e.} they decay exponentially with the propagation distance).
Here we are interested in the far field expression of the wave so we can restrict our attention to the propagating modes.
Eq.~(\ref{eq:zetm-bis}) shows that, when $\vm={\bf 0}$, 
$\zet(\kx)$ is real if and only if $\smash{\celref\norm{\kx}}\leq 1$.
When $\vm\neq {\bf 0}$ but  the Mach number $\Mach=\smash{\norm{\vm}/\celref}<1$,
the $\kx$-domain where  $\zet(\kx)$ is real is more complicated, but it contains the disk $\smash{\celref\norm{\kx}}\leq\smash{\frac{1}{1+\Mach}}$.
We only consider subsonic flows with low Mach numbers in the subsequent developments, thus the condition $\Mach<1$ will always be fulfilled.
We can now state the main result of this section, that gives the expression of the Green's function in absence of  fluctuation of the ambient flow
velocity \alert{in terms of upward and downward wavenumber components}.

\begin{proposition}[Solution of \eref{eq:syst:no:pert} and \eref{eq:FFT:syst:no:pert}]\label{lem:no:pert}
The solution $\pv= \begin{pmatrix}  \qj \\ \ug \end{pmatrix}$ of \eref{eq:syst:no:pert} with the source term
$\fls= \begin{pmatrix}  \hj \\ \fv \end{pmatrix}$  reads:
\begin{align}
\nonumber
\pv(t,\xv,\zj) &= \Grvref\convol\fls(t,\xv,\zj)\\
&=\iiint \Grvref(t-t^\prime,\xv-\xv^\prime,\zj-\zj^\prime)\fls(t^\prime,\xv^\prime,\zj^\prime)\di t^\prime \di \xv^\prime \di \zj^\prime\,,
\end{align}
where 
\begin{equation}
\Grvref(t,\xv,\zj)=\Grvrefm(t,\xv,\zj)\Heaviside(\zj)+\Grvrefd(t,\xv,\zj)(1-\Heaviside(\zj))
\end{equation}
 is the Green's function expressed in terms of the Green's function for upward ($+$) and downward ($-$) waves
\begin{equation}
\label{def:green0}
\Grvref^\pm(t,\xv,\zj) =
\frac{\rref}{2(2\pi)^3}\iint\iexp^{\ci\om(\kx\cdot\xv\pm\zet(\kx)\zj-t)}\TF{\grv}^\pm_\iref(\kx)\otimes\TF{\grv}^\pm_\iref(\kx)\,\om^2\di\om\di\kx\,,
\end{equation}
 the Heaviside step function $\zj\mapsto\Heaviside(\zj)$ (such that $\Heaviside(\zj)=1$ if $\zj\geq 0$ and $\Heaviside(\zj)=0$ otherwise),
and the (generalized) eigenvectors of propagation:
\begin{equation}
\label{eq:eigenmodes}
\TF{\grv}^\pm_\iref(\kx)=\frac{1}{\sqrt{\zet(\kx)}} \begin{pmatrix}
{1}/{\compref} \\ {\kx}/{\bet(\kx)} \\ \pm\zet(\kx) \end{pmatrix} .
\end{equation}
For point sources located at the depth $\zjs\in[-L,0]$ with forcing terms $\FLS(t,\xv)$:
\begin{equation}
\fls(t,\xv,\zj) = \FLS(t,\xv)\dirac(\zj-\zjs)\,,
\quad 
\FLS=\begin{pmatrix} \Hj \\ \Fv_\xv \\ \Fj_\zj \end{pmatrix} , 
\end{equation}
the solution:
\begin{equation}
\pv=\begin{pmatrix}  \qj \\ \ug_\xv \\ \uj_\zj\end{pmatrix}
\end{equation}
of the system (\ref{eq:FFT:syst:no:pert}) in the Fourier domain reads for any $\zj \in (-L,\zjs)$:
\begin{equation}\label{eq:FFT-p0:gen}
\begin{split}
\TF{\pv}(\om,\kx,\zj) 
&=  \demi\iexp^{-\ci\om\zet(\kx)(\zj-\zjs)}\rref\,\TF{\FLSg}^-(\om,\kx)\tfgrvrefd(\kx)\,,
\end{split}
\end{equation}
where $\smash{\TF{\FLSg}^-}(\om,\kx)=\smash{\TF{\grv}^-_\iref(\kx)\cdot\TF{\FLS}(\om,\kx)}$.
\end{proposition}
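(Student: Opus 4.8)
The plan is to build the Green's function from the reduced first-order system \eref{eq:mean_flow_celerity} for the pair $(\TF{\qj},\TF{\uj}_\zj)$ and then to recover the horizontal momentum $\TF{\ug}_\xv$ algebraically. First I would treat a source concentrated at a single depth, replacing $\fls$ by $\fls_0\dirac(\zj-\zj^\prime)$, and solve \eref{eq:mean_flow_celerity} separately on $\zj>\zj^\prime$ and $\zj<\zj^\prime$. Away from the source the system is homogeneous, so by the diagonalization through $\MMref$ already carried out above, each solution is a combination of the two modes $\iexp^{\pm\ci\om\zet(\kx)\zj}$; the outgoing (radiation) condition then selects the upward mode $\iexp^{+\ci\om\zet\zj}$ for $\zj>\zj^\prime$ and the downward mode $\iexp^{-\ci\om\zet\zj}$ for $\zj<\zj^\prime$. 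This is the origin of the splitting $\Heaviside(\zj)$ versus $1-\Heaviside(\zj)$ in the statement.

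Next I would fix the two mode amplitudes by the jump conditions obtained on integrating \eref{eq:mean_flow_celerity} across $\zj^\prime$: the jump of $\TF{\qj}$ equals $\frac{1}{\celref^2}\TF{\fj}_\zj$, while the jump of $\TF{\uj}_\zj$ equals $\frac{1}{\celref^2}\TF{\hj}+\frac{\rref}{\bet(\kx)}\kx\cdot\TF{\fv}_\xv$. The horizontal force $\TF{\fv}_\xv$ enters only through this second combination, a residue of the elimination of $\TF{\ug}_\xv$; this is the point requiring the most care, since $\TF{\fv}_\xv$ is absent from the propagating polarization yet must feed the amplitude. Using the mode relation $\TF{\uj}_\zj=\pm\compref\zet\TF{\qj}$, the jumps reduce to a $2\times2$ linear system whose solution yields the response vector $\frac{\rref}{2}\bigl(\TF{\grv}^\pm_\iref(\kx)\cdot\TF{\FLS}\bigr)\,\TF{\grv}^\pm_\iref(\kx)$: the eigenvector $\TF{\grv}^\pm_\iref$ of \eref{eq:eigenmodes} carries the output polarization, its component $1/\compref$ for $\qj$ and $\pm\zet$ for $\uj_\zj$ coming from the polarization relation and its component $\kx/\bet$ for $\ug_\xv$ from the algebraic identity $\TF{\ug}_\xv=\frac{\compref\kx}{\bet}\TF{\qj}$ valid away from the source. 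The physically meaningful factor is the $1/\zet$ that the $2\zet$ in the amplitude denominator produces; distributing it as $1/\sqrt{\zet}$ between projection and polarization is precisely the symmetric normalization of \eref{eq:eigenmodes} that recasts the response as the tensor product $\TF{\grv}^\pm_\iref\otimes\TF{\grv}^\pm_\iref$ of \eref{def:green0}.

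With the response to a depth-localized source in hand, I would assemble the full space-time kernel by inverting the Fourier transforms in $\om$ and $\kx$ through \eref{eq:FFTh-inv} (which carries the weight $\om^2$), thereby reproducing \eref{def:green0} and writing $\pv=\Grvref\convol\fls$. Finally, specializing to the point source $\fls=\FLS\dirac(\zj-\zjs)$ and restricting to $\zj\in(-L,\zjs)$ (a region entirely below the source, where in the fluctuation-free homogeneous medium only the downward mode survives), I would read off the coefficient $\frac{\rref}{2}\TF{\FLSg}^-$ with $\TF{\FLSg}^-=\TF{\grv}^-_\iref\cdot\TF{\FLS}$ and the polarization $\tfgrvrefd$, which is exactly \eref{eq:FFT-p0:gen}. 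I expect the main obstacle to be the bookkeeping of the source through the elimination of $\TF{\ug}_\xv$: one must verify that the combined jump $\frac{1}{\celref^2}\TF{\hj}+\frac{\rref}{\bet}\kx\cdot\TF{\fv}_\xv$, together with the $\sqrt{\zet}$ normalization, reconstructs the inner product $\TF{\grv}^\pm_\iref\cdot\TF{\FLS}$ with the stated $\rref/2$ prefactor and no spurious factors of $\zet$, $\bet$, or $\compref$.
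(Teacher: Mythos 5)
Your proposal is correct and follows essentially the same route as the paper's proof: diagonalization via $\MMref$ into upward/downward modes, radiation conditions selecting the outgoing mode on each side of the source depth, the jump conditions of \eref{eq:jump-uq} fixing the two mode amplitudes, and the algebraic relation $\TF{\ug}_\xv=\frac{\compref}{\bet(\kx)}\TF{\qj}\kx$ recovering the horizontal momentum, which together reproduce the tensor-product structure $\frac{\rref}{2}\bigl(\TF{\grv}^\pm_\iref\cdot\TF{\FLS}\bigr)\TF{\grv}^\pm_\iref$ of \eref{def:green0} and the transmitted-field formula \eref{eq:FFT-p0:gen}. The only cosmetic difference is that you solve piecewise and match across the source, whereas the paper writes the mode amplitudes as Heaviside jumps plus constants fixed by the radiation conditions at $\zj=-L$ and $\zj=0$; these are the same argument.
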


{\it Remark.} In (\ref{def:green0}) the integral is over all $\kx$ in $\mathbb{R}^2$. 
For $\kx$ such that $\zet(\kx)$  is imaginary,
the sign of the square root of (\ref{eq:zetm-bis}) which gives rise to an exponentially decaying function in (\ref{def:green0}) is selected.
We are, however, interested only in the far-field expression, so it is possible to restrict 
the integral over the $\kx$-domain where $\zet(\kx)$ is real-valued.

\begin{proof}
We \alert{specify} the source $\fls(t,\rv)$, considering that it is located at the depth $\smash{\zjs}$ and generates forcing terms $\Fv(t,\xv)=\smash{(\Fv_\xv(t,\xv),\Fj_\zj(t,\xv))}$ and $\Hj(t,\xv)$ such that\string:
\begin{equation}
\fv(t,\xv,\zj)=\Fv(t,\xv)\dirac(\zj-\zjs)\,,\quad \hj(t,\xv,\zj)=\Hj(t,\xv)\dirac(\zj-\zjs)\,.
\end{equation}
Then the vertical momentum $\smash{\TF{\uj}_\zj}$ and pressure $\smash{\TF{\qj}}$ from \eref{eq:mean_flow_celerity} satisfy the jump conditions\string:
\begin{equation}\label{eq:jump-uq}
\begin{split}
\TF{\uj}_\zj(\om,\kx,\zjs^+)- \TF{\uj}_\zj(\om,\kx,\zjs^-) &=\frac{1}{\celref^2}\TF{\Hj}(\om,\kx)+\frac{\rref}{\bet(\kx)}\kx\cdot\TF{\Fv}_\xv(\om,\kx)\,, \\
\TF{\qj}(\om,\kx,\zjs^+)- \TF{\qj}(\om,\kx,\zjs^-) &= \frac{1}{\celref^2}\TF{\Fj}_\zj(\om,\kx)\,.
\end{split}
\end{equation}
Consequently, the upward and downward propagating mode amplitudes of \eref{eq:wave_modes} satisfy the jump conditions\string:
\begin{equation}\label{eq:jump_hom_flow}
\begin{split}
\rightm(\om,\kx,\zjs^+)- \rightm(\om,\kx,\zjs^-) &=\rref\iexp^{-\ci\om\zet(\kx)\zjs}\Sj_\rightm(\om,\kx)\,, \\
\leftm(\om,\kx,\zjs^+)- \leftm(\om,\kx,\zjs^-) &=\rref\iexp^{+\ci\om\zet(\kx)\zjs}\Sj_\leftm(\om,\kx)\,,
\end{split}
\end{equation}
with the source contributions given by\string:
\begin{equation}
\begin{pmatrix} \Sj_\rightm(\om,\kx) \\ \Sj_\leftm(\om,\kx) \end{pmatrix} = \compref^{-1}\MMref\begin{pmatrix} \TF{\Fj}_\zj(\om,\kx) \\ \TF{\Hj}(\om,\kx)+\frac{\compref}{\bet(\kx)}\kx\cdot\TF{\Fv}_\xv(\om,\kx) \end{pmatrix}\,.
\end{equation}
One thus deduces that:
\begin{equation}
\begin{pmatrix} \rightm \\ \leftm \end{pmatrix}(\om,\kx,\zj) = \rref \Heaviside(\zj-\zjs) \begin{bmatrix} \iexp^{-\ci\om\zet(\kx)\zjs} & 0 \\ 0 & \iexp^{+\ci\om\zet(\kx)\zjs} \end{bmatrix}\begin{pmatrix} \Sj_\rightm \\ \Sj_\leftm \end{pmatrix}(\om,\kx)+\begin{pmatrix} A \\ B \end{pmatrix}(\om,\kx)\,,
\end{equation}
where $A$ and $B$ are functions of $\om$ and $\kx$ independent of $\zj$ determined by the boundary conditions imposed on the wave modes. They specify the wave forms impinging the ambient flow (\ref{eq:def-flow}) at its boundaries $\zj=-L$ and $\zj=0$ if the source lies within it \cite{GAR05}, $-L<\zjs<0$, or $\zj=-L$ and $\zj=\zjs\geq 0$ if the source lies outside it \cite{FOU07}. In agreement with the experiments we have in mind, we will consider in the subsequent developments that the second situation rather takes place\string; see \fref{fg:rand_flow}. However, for the construction of the Green's function pertaining to the ambient flow, let us first consider that $-L<\zjs<0$. Assuming that no energy is coming upward from $\zj=-\infty$ or downward from $\zj=+\infty$, translates into the radiation conditions\string:
\begin{equation}
\rightm(\om,\kx,-L)=\leftm(\om,\kx,0)=0\,.
\end{equation}
Thus $A=0$ and $B=-\rref\iexp^{+\ci\om\zet(\kx)\zjs}\Sj_\leftm(\om,\kx)$, and one has with \eref{eq:wave_modes}:
\begin{align}
\nonumber
\begin{pmatrix} \TF{\qj} \\ \TF{\uj}_\zj \end{pmatrix}(\om,\kx,\zj) =
\demi \iexp^{+\ci\om\zet(\kx)(\zj-\zjs)}\begin{pmatrix} \frac{\TF{\Hj}(\om,\kx)}{\celref^2\IO(\kx)}+\frac{\rref\kx\cdot\TF{\Fv}_\xv(\om,\kx)}{\bet(\kx)\IO(\kx)}+\frac{\TF{\Fj}_\zj(\om,\kx)}{\celref^2} \\ \frac{\TF{\Hj}(\om,\kx)}{\celref^2}+\frac{\rref\kx\cdot\TF{\Fv}_\xv(\om,\kx)}{\bet(\kx)}+\frac{\IO(\kx)\TF{\Fj}_\zj(\om,\kx)}{\celref^2} \end{pmatrix}\Heaviside(\zj-\zjs) \\
+ \demi\iexp^{-\ci\om\zet(\kx)(\zj-\zjs)}\begin{pmatrix} \frac{\TF{\Hj}(\om,\kx)}{\celref^2\IO(\kx)}+\frac{\rref\kx\cdot\TF{\Fv}_\xv(\om,\kx)}{\bet(\kx)\IO(\kx)}-\frac{\TF{\Fj}_\zj(\om,\kx)}{\celref^2} \\  \frac{\IO(\kx)\TF{\Fj}_\zj(\om,\kx)}{\celref^2}-\frac{\TF{\Hj}(\om,\kx)}{\celref^2}-\frac{\rref\kx\cdot\TF{\Fv}_\xv(\om,\kx)}{\bet(\kx)} \end{pmatrix}(1-\Heaviside(\zj-\zjs))\,.
\label{eq:FTqu1}
\end{align}
From \eref{eq:FFT:syst:no:pert} one has $\smash{\TF{\ug}_\xv}=\smash{\frac{\compref}{\bet (\kx) }\TF{\qj}\kx}$ 
away from the source. As a result, after defining the four-dimensional fields $\pv$ and $\FLS$ as in the proposition, \eref{eq:FTqu1} finally yields:
\begin{align}
\nonumber
\TF{\pv}(\om,\kx,\zj)= & \;\;\demi\Heaviside(\zj-\zjs)\iexp^{+\ci\om\zet(\kx)(\zj-\zjs)}\rref\,\tfgrvrefm(\kx)\otimes\tfgrvrefm(\kx)\,\TF{\FLS}(\om,\kx) \\
&+ \demi(1-\Heaviside(\zj-\zjs))\iexp^{-\ci\om\zet(\kx)(\zj-\zjs)}\rref\,\tfgrvrefd(\kx)\otimes\tfgrvrefd(\kx)\,\TF{\FLS}(\om,\kx)\,,
\end{align}
where the (generalized) eigenvectors of propagation $\TF{\grv}^\pm_\iref(\kx)$ are defined by
(\ref{eq:eigenmodes}).
This result identifies the Green's function for upward waves $\smash{\tfGrvrefm(\om,\kx,\zj-\zjs)}$, $\smash{\zj>\zjs}$, and the Green's function for downward waves $\smash{\tfGrvrefd(\om,\kx,\zj-\zjs)}$, $\zj<\zjs$, as:
\begin{equation}
\TF{\Grv}^\pm_\iref(\om,\kx,\zj)=\demi\iexp^{\pm\ci\om\zet(\kx)\zj}\rref\,\TF{\grv}^\pm_\iref(\kx)\otimes\TF{\grv}^\pm_\iref(\kx)\,.
\end{equation}
Since $\BS{a}\otimes\BS{b}\,\BS{c}=(\BS{b}\cdot\BS{c})\BS{a}$ for any vector $\BS{c}$, $\BS{b}\cdot\BS{c}$ being the usual inner product of $\BS{b}$ and $\BS{c}$, the scalars:
\begin{align}
\nonumber
\TF{\FLSg}^\pm(\om,\kx) &=\TF{\grv}^\pm_\iref(\kx)\cdot\TF{\FLS}(\om,\kx) \\
&=\frac{1}{\sqrt{\zet(\kx)}}\left(\frac{\TF{\Hj}(\om,\kx)}{\compref}+\frac{\kx \cdot \TF{\Fv}_\xv(\om,\kx)}{\bet(\kx)} \pm\zet(\kx)\TF{\Fj}_\zj(\om,\kx)\right)
\label{eq:FLSg}
\end{align}
then turn out to be the (generalized) coordinates of the forcing terms $\FLS$ on the eigenvectors of propagation, in the Fourier domain.
\end{proof}

\begin{figure}[h!]
\centering{\includegraphics[scale=0.4]{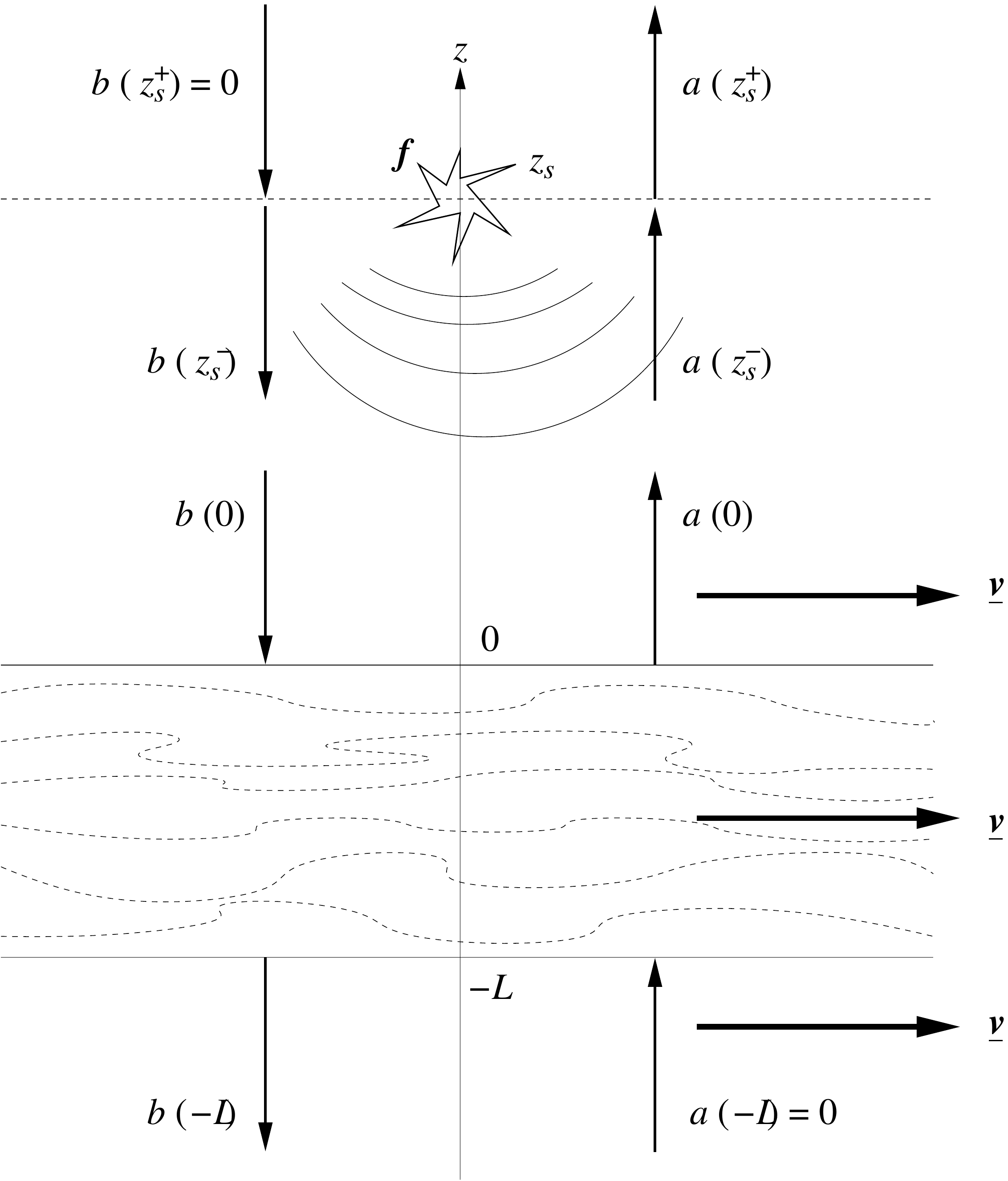}}
\caption{\textit{Acoustic waves in an homogeneous flow of typical thickness $L$ and ambient velocity $\vm$. The emitting sources $\fv$ are centered at some $\zjs\geq 0$, $\leftm$ is the downward wave mode, and $\rightm$ is the upward wave mode.}}
\label{fg:rand_flow}
\end{figure}

\section{Transmitted fields with fluctuations of the ambient flow velocity} \label{sec:pert}

We now turn to the case of an ambient flow with a fluctuating velocity, namely \eref{eq:def-flow} with $\scale > 0$. The system (\ref{Eq:lin:Eul}) reads in this case: 
\begin{equation}
\begin{split}
\dconv{\qj}+\frac{1}{\rref}\bnabla\cdot\ug &=  \frac{\hj}{\compref} - \scale\vpert\cdot\bnabla\qj\,, \\
\dconv{\ug}+\compref\bnabla\qj & =\rref\fv -\scale\vpert\cdot\bnabla\ug +\scale\rref(\gamma-1)\dconv{\vpert}\qj - \scale((\bnabla\cdot\vpert)\Id_3 +\Dx\vpert)\ug\,,
\end{split}
\end{equation}
where $\smash{\dconv{}}$ is the convective derivative (\ref{eq:dconvm}). In view of \Pref{lem:no:pert} its solution can be written as:
\begin{equation}
\begin{split}
\pv &= \Grvref\convol\left(\fls - \scale\Kls\pv\right)\,,
\end{split}
\end{equation}
where:
\begin{equation}\label{eq:Koperator}
\Kls = \begin{bmatrix} \compref\vpert\cdot\bnabla  & \bzero \\ (1-\gamma)\dconv{\vpert} & \frac{1}{\rref}(\bnabla\cdot\vpert+\vpert\cdot\bnabla)\Id_3 + \frac{1}{\rref}\Dx\vpert \end{bmatrix}\,.
\end{equation}
This is a so-called Lippmann-Schwinger equation \cite{LIP50}, of which a solution can formally be constructed by induction:
\begin{equation}
\label{eq:LS-sol}
\pv^{(n+1)} = \pv^{(0)} - \scale\Grvref\convol\Kls\pv^{(n)} \\
= \left(\Id_4 + \sum_{j=1}^n\left(-\scale\Grvref\convol\Kls\right)^j\right)\pv^{(0)}\,,
\end{equation}
with $\smash{\pv^{(0)}:=\Grvref\convol\fls}$. We will focus on the power spectral density (PSD) of $\pv$ in the subsequent developments, and more particularly the corrections to the PSD of $\smash{\pv^{(0)}}$ induced by the random fluctuations $\scale\vpert(t,\rv)$ of the constant ambient flow velocity $\vm$. The leading correction term is proportional to $\smash{\scale^2}$ as we will see below. Therefore, the above expansion is truncated at order $2$ because it is anticipated that higher order terms will be negligible:
\begin{equation}\label{eq:p2}
\pv\simeq\pv^{(2)} = \pv^{(0)} - \scale\Grvref\convol\Kls \pv^{(0)} + \scale^2\left(\Grvref\convol\Kls\right)^2 \pv^{(0)}\,.
\end{equation}
We note for convenience:
\begin{equation}
\pv^{(01)} = \Grvref\convol\Kls\pv^{(0)}\,,\quad\pv^{(02)} = \left(\Grvref\convol\Kls\right)^2\pv^{(0)}\,,
\end{equation}
the first-order and second-order terms, respectively, in the expansion of $\pv$ about the unperturbed, or ballistic pressure/momentum fields $\smash{\pv^{(0)}}$ solving \eref{eq:syst:no:pert}. \fref{fg:2nd-order-terms} sketches the zeroth, first, and second order contributions to the perturbative expansion (\ref{eq:p2}), where $\pv^{(01)}$ corresponds to the waves that have been scattered once by the random heterogeneities of the ambient flow velocity $\vm$, and $\pv^{(02)}$ corresponds to the waves that have been scattered twice. All these quantities are the combinations of upward and downward fields. We show on \fref{fg:2nd-order-terms} only the downward fields, \emph{i.e.} the transmitted fields for $\zj \leq-L$.

\begin{figure}[h!]
\centering{\includegraphics[scale=0.4]{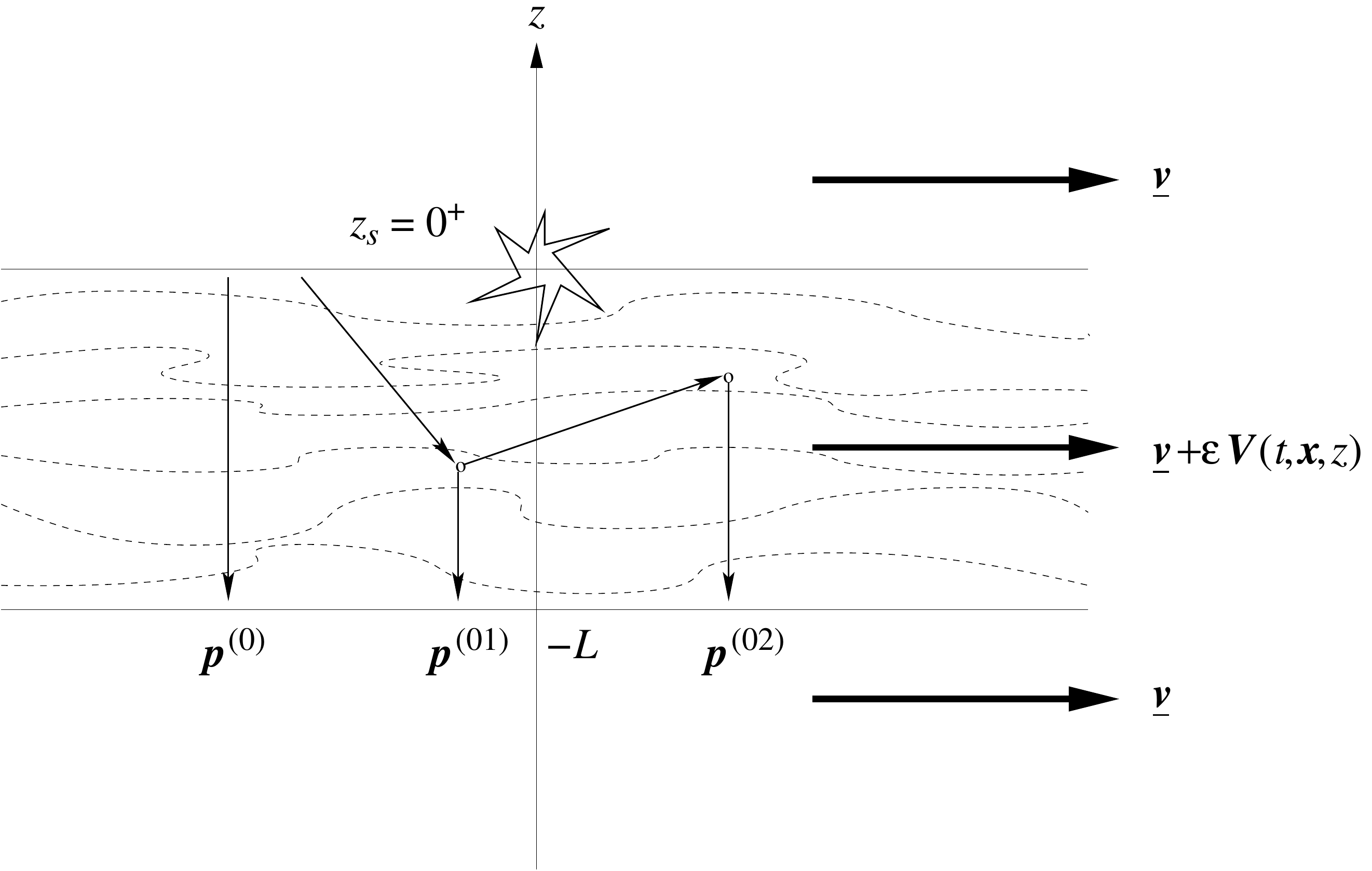}}
\caption{Zeroth (ballistic), first, and second order perturbations in the second-order expansion of the transmitted pressure/momentum fields.}\label{fg:2nd-order-terms}
\end{figure}

Regarding the first-order perturbation $\smash{\pv^{(01)}}$, we have explicitly:
\begin{equation}\label{eq:p01:TimeDom}
\pv^{(01)}(t,\xv,\zj) = \iiint \Grvref(t-t^\prime,\xv-\xv^\prime,\zj-\zj^\prime)\Kls\pv^{(0)}(t^\prime,\xv^\prime,\zj^\prime)\di t^\prime \di \xv^\prime \di \zj^\prime\,.
\end{equation}
The second-order perturbation $\smash{\pv^{(02)}}$ is expressed similarly by iterating the convolution product. As will be seen in the following \sref{sec:PSD}, the first-order contribution in this proposed perturbative model is responsible for the spectral broadening effect depicted in \cite{CAN75,CAN76a,CAN76b}. Therefore we will mainly focus on this term in the following analysis.
We can now state the main result of this section, that gives the expression of the single-scattered wave in terms of the velocity perturbations.

\begin{proposition}\label{Prop:p01}
Assume the forcing terms are point sources as in \Pref{lem:no:pert} with $\smash{\zjs=0^+}$. Then the first-order transmitted perturbations $\smash{\pv^{(01)}}$ in the perturbative expansion (\ref{eq:LS-sol}) are given by:
\begin{align}
\nonumber
\TF{\pv}^{(01)}(\om,\kx,\zj)=\frac{\rref^2\iexp^{-\ci\om\zet(\kx)\zj}}{4(2\pi)^3}\bigg(\iiint\iexp^{\ci\om\slowness(\om,\kx,\omp,\kp)\zpj}\cernelv(\om,\kx,\omp,\kp)\cdot\TF{\vpert}(\omp,\kp,\zpj) \\
\times\TF{\FLSg}^-\!\left(\om-\omp,\frac{\om\kx-\omp\kp}{\om-\omp}\right)\,\om^{\prime 2}\di\omp\di\kp\di\zpj\bigg)\,\tfgrvrefd(\kx)\,,
\label{eq:p-01-TF}
\end{align}
where $\TF{\FLSg}^-$ is the generalized coordinate given by \eref{eq:FLSg}, and the slowness $\slowness$ is given by:
\begin{equation}\label{eq:phase1}
\slowness(\om,\kx,\omp,\kp)=\zet(\kx)-\left(1-\frac{\omp}{\om}\right)\zet\left(\frac{\om\kx-\omp\kp}{\om-\omp}\right)\,.
\end{equation}
The vector $\cernelv(\om,\kx,\omp,\kp)$ is:
\begin{equation}
\label{def:hatc}
\cernelv(\om,\kx,\omp,\kp)=\kernelv(\om,\kx,\omp,\kp)-\ci\om\slowness(\om,\kx,\omp,\kp)\dernelv(\om,\kx,\omp,\kp)
\end{equation}
where $\smash{\Kms}=\smash{\mathrm{diag}(\compref,\frac{1}{\rref}\Id_3)}$, and $\kernelv(\om,\kx,\omp,\kp)$ and $\dernelv(\om,\kx,\omp,\kp)$ are given by:
\begin{multline}
\label{eq:kernelv}
\kernelv(\om,\kx,\omp,\kp)=\frac{1}{\rref\compref}\zet\left(\frac{\om\kx-\omp\kp}{\om-\omp}\right)^{-\demi}
\smash{\ci\omp\rref(\gamma-1)\bet(\kp)}
\TF{\grv}_1(\kx) \\
+\frac{1}{\rref}\TF{\grv}_1(\kx)\cdot\TF{\grv}_1\left(\frac{\om\kx-\omp\kp}{\om-\omp}\right)\begin{pmatrix} \ci\omp\kp \\ 0 \end{pmatrix}+\frac{1}{\rref}\begin{pmatrix} \ci\omp\kp \\ 0 \end{pmatrix}\cdot\TF{\grv}_1\left(\frac{\om\kx-\omp\kp}{\om-\omp}\right)\TF{\grv}_1(\kx) \\
+\tfgrvrefd(\kx)^\itr\Kms\tfgrvrefd\left(\frac{\om\kx-\omp\kp}{\om-\omp}\right)\kernelv_d\left(\om-\omp,\frac{\om\kx-\omp\kp}{\om-\omp}\right)\,,
\end{multline}
\begin{multline}
\label{eq:dernelv}
\dernelv(\om,\kx,\omp,\kp)=
\frac{1}{\rref}\TF{\grv}_1(\kx)\cdot\TF{\grv}_1\left(\frac{\om\kx-\omp\kp}{\om-\omp}\right)\begin{pmatrix} \bzero \\ 1 \end{pmatrix} \\
+\frac{1}{\rref}\begin{pmatrix} \bzero \\ 1 \end{pmatrix}\cdot\TF{\grv}_1\left(\frac{\om\kx-\omp\kp}{\om-\omp}\right)\TF{\grv}_1(\kx)\,,
\end{multline}
respectively, with $\tfgrvrefd$ defined by (\ref{eq:eigenmodes}) and
\begin{equation}
\smash{\kernelv_d(\om,\kx)}=\ci\om \begin{pmatrix} \kx \\ -\zet(\kx)\end{pmatrix}, \quad \quad 
\TF{\grv}_1(\kx)=\frac{1}{\sqrt{\zet(\kx)}}
\begin{pmatrix}
{\kx}/{\bet(\kx)}\\-\zet(\kx)\end{pmatrix} \,.
\end{equation}
\end{proposition}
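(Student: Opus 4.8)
The plan is to evaluate the first-order term \eqref{eq:p01:TimeDom} directly in the Fourier domain, treating the whole computation as a systematic substitution: insert the explicit ballistic field $\pv^{(0)}$ from \Pref{lem:no:pert} (with $\zjs=0^+$, so that within the layer its phase is $\iexp^{-\ci\om\zet(\kx)\zpj}$), apply the scattering operator $\Kls$ of \eqref{eq:Koperator}, and then carry out the convolution against the Green's function $\Grvref$. Under \eqref{eq:FFTh} the convolution $\Grvref\convol(\cdot)$ becomes, in the $(\om,\kx)$ variables, multiplication by the Fourier-domain Green's kernel together with a remaining integral over the scattering depth $\zpj$; every differential operator in $\dconv{}$ and $\bnabla$ turns into an algebraic multiplier, \emph{except} those acting in $\zpj$, which \eqref{eq:FFTh} does not transform. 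The only genuinely non-algebraic feature is the random coefficient $\vpert$ inside $\Kls$, which turns each product $\vpert\cdot(\text{derivatives of }\pv^{(0)})$ into a convolution in the transform variables; this is what produces the extra integration over $(\omp,\kp)$ in \eqref{eq:p-01-TF}.

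First I would transform $\Kls\pv^{(0)}$. Since the variable conjugate to $\xv$ is the wavenumber $\om\kx$ rather than $\kx$ itself, the convolution splits it additively as $\om\kx=\omp\kp+(\om-\omp)\kx''$ with $\kx''=\tfrac{\om\kx-\omp\kp}{\om-\omp}$; hence $\vpert$ carries $(\omp,\kp)$ while the ballistic field is evaluated at frequency $\om-\omp$ and slowness $\kx''$, exactly the argument appearing in $\TF{\FLSg}^-$ and in every $\zet(\kx'')$ of \eqref{eq:p-01-TF}--\eqref{eq:dernelv}. Turning the crank on the pieces of $\Kls$ then gives: $\dconv{\vpert}\to-\ci\omp\bet(\kp)$, producing the $\ci\omp\rref(\gamma-1)\bet(\kp)$ factor of $\kernelv$; the transport part $\Kms(\vpert\cdot\bnabla)$ acting diagonally on all of $\pv^{(0)}$, once contracted through the downward dyad $\tfgrvrefd(\kx)\otimes\tfgrvrefd(\kx)$, yields the $\Kms$-weighted last term of \eqref{eq:kernelv}, with $\kernelv_d(\om-\omp,\kx'')=\ci(\om-\omp)\bigl(\kx'',-\zet(\kx'')\bigr)$ being precisely the wavenumber $3$-vector of $\pv^{(0)}$ (the vertical derivative of $\pv^{(0)}$ being a plain multiplier here, as its $\zpj$-dependence is a known exponential); and the \emph{horizontal} derivatives that fall on $\vpert$ in $\bnabla\cdot\vpert$ and $\Dx\vpert$ give $\ci\omp\kp$, producing the two symmetric $\TF{\grv}_1$-contractions in \eqref{eq:kernelv}.

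The one term that is not a plain multiplier, and that I expect to drive the whole argument, feeds $\dernelv$. Because \eqref{eq:FFTh} does not act in $\zj$, the \emph{vertical} derivatives that fall on $\vpert$ (the $\partial_{\zpj}V_\zj$ in $\bnabla\cdot\vpert$ and the corresponding entries of $\Dx\vpert$) cannot be replaced by an algebraic factor; they remain as $\partial_{\zpj}$ under the $\zpj$-integral, with coefficient precisely $\dernelv$ of \eqref{eq:dernelv} (same $\TF{\grv}_1$-structure as the horizontal pieces but with the vertical unit $\begin{pmatrix}\bzero\\1\end{pmatrix}$ in place of $\begin{pmatrix}\ci\omp\kp\\0\end{pmatrix}$). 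I would then assemble the Green's-function convolution: since the scattering support $\zpj\in[-L,0]$ lies above the transmission region $\zj\le-L$, only the downward branch $\tfGrvrefd(\om,\kx,\zj-\zpj)=\demi\iexp^{-\ci\om\zet(\kx)(\zj-\zpj)}\rref\,\tfgrvrefd(\kx)\otimes\tfgrvrefd(\kx)$ contributes, fixing the output direction $\tfgrvrefd(\kx)$ and producing the $\Kms$-weighted inner products through the dyad contraction. Factoring out the $\zj$-independent transmission phase $\iexp^{-\ci\om\zet(\kx)\zj}$, the residual $\zpj$-dependence collapses to $\iexp^{\ci\om\slowness\zpj}$ because $\om\slowness=\om\zet(\kx)-(\om-\omp)\zet(\kx'')$, which is \eqref{eq:phase1}. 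Integrating by parts in $\zpj$ then moves the surviving $\partial_{\zpj}$ off $\vpert$ onto this phase (the boundary terms vanishing since $\vpert$ is confined to the layer), replacing it by $-\ci\om\slowness$ while leaving the $\zpj$-integral and $\TF{\vpert}$ intact; this is exactly how $\dernelv$ enters the combined vector as $\cernelv=\kernelv-\ci\om\slowness\dernelv$ of \eqref{def:hatc}, completing \eqref{eq:p-01-TF}.

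I expect the main obstacle to be bookkeeping rather than anything conceptual. Each of the four pieces of $\Kls$ must be sorted by whether its derivative falls on $\vpert$ or on $\pv^{(0)}$ and by its orientation, and only the vertical-on-$\vpert$ derivatives survive as the operator $\partial_{\zpj}$ feeding $\dernelv$, while everything else becomes an explicit multiplier feeding $\kernelv$. The most delicate point — the one I would verify first — is the clean reduction of precisely those surviving vertical derivatives to the single factor $-\ci\om\slowness$ through the combined-phase integration by parts, together with checking that the remaining terms really do collapse into the compact momentum-eigenvector contractions $\TF{\grv}_1(\kx)\cdot\TF{\grv}_1(\kx'')$ and the $\Kms$-weighted inner product $\tfgrvrefd(\kx)^\itr\Kms\tfgrvrefd(\kx'')$; once the argument-splitting convention $\kx''=\tfrac{\om\kx-\omp\kp}{\om-\omp}$ is fixed, the rest is substitution and collection of terms.
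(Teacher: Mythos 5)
Your proposal follows the paper's own proof essentially step for step: Fourier-transforming the first-order Lippmann--Schwinger term, invoking the product--convolution rule with the wavenumber splitting $\om\kx=\omp\kp+(\om-\omp)\kx''$, replacing $\partial_{\zpj}$ acting on the ballistic field by its multiplier while keeping the vertical derivatives falling on $\vpert$ as genuine operators, contracting through the downward Green's dyad with the $\Kms$-weighted symbol, and finally integrating by parts in $\zpj$ against the combined phase to obtain $\cernelv=\kernelv-\ci\om\slowness\dernelv$. The approach and all key mechanisms (including the phase identity yielding $\slowness$ and the sorting of terms into $\kernelv$ versus $\dernelv$) coincide with the paper's argument, so there is nothing to correct.
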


\begin{proof}
 In the Fourier domain (\ref{eq:FFTh}), the first-order perturbation reads:
\begin{equation}\label{eq:TFp01}
\begin{split}
&\TF{\pv}^{(01)}(\om,\kx,z) \\
&= \iint \iexp^{\ci\om(t-\kx\cdot\xv)} \di t \di\xv \iiint \Grvref(t-t^\prime,\xv-\xv^\prime,\zj-\zj^\prime)\Kls\pv^{(0)}(t^\prime,\xv^\prime,z^\prime)\di t^\prime \di \xv^\prime \di z^\prime \\
&=  \int\!\!\left(\iint \iexp^{\ci\om(t-\kx\cdot\xv)} \Grvref(t,\xv,\zj-\zj^\prime)\di t\di\xv\right)\!\!\left(\iint \iexp^{\ci\om(t^\prime-\kx\cdot\xv^\prime)}\Kls\pv^{(0)}(t^\prime,\xv^\prime,\zj^\prime)\di t^\prime \di \xv^\prime\right) \di\zj^\prime
\end{split}
\end{equation}
with the changes of variable $t-t'\to t$ and $\xv-\xv'\to\xv$.  The first bracketed term is nothing but $\smash{\TF{\Grv}_\iref(\om,\kx,\zj-\zj^\prime)}$, while the second one is the Fourier transform of a product--hence a convolution product in the Fourier domain (\ref{eq:FFTh}). In this setting, the Fourier transform of the product of regular functions $f(t,\xv,\zj)$ and $g(t,\xv,\zj)$ reads:
\begin{equation}\label{eq:TFprod}
\TF{fg}(\om,\kx,\zj)=\frac{1}{(2\pi)^3}\iint\TF{f}(\om^\prime,\kx^\prime,\zj)\,\TF{g}\left(\om-\om^\prime,\frac{\om\kx-\om^\prime\kx^\prime}{\om-\om^\prime},\zj\right)\om^{\prime 2}\di\om^\prime\di\kx^\prime\,.
\end{equation}
Thus it remains to compute the Fourier transform (\ref{eq:FFTh}) of $\Kls$, which depends on time $t$ and the horizontal spatial coordinates $\xv$ through $\vpert(t,\xv,\zj)$ and its various products with $\bnabla=\smash{(\bnabla_\xv,\partial_\zj)}$. As for the upper left term $\smash{\Kj_{11}}=\smash{\compref\vpert\cdot\bnabla}$ of $\Kls$ for instance, we have:
\begin{multline}\label{eq:K11}
\TF{\Kj_{11}f}(\om,\kx,\zj)=\frac{\compref}{(2\pi)^3}\iint\left[\ci(\om\kx-\omp\kp)\cdot\TF{\vpert}_\xv(\omp,\kp,\zj)+\TF{\vpertj}_\zj(\omp,\kp,\zj)\partial_{\zj}\right] \\
\times\TF{f}\left(\om-\omp,\frac{\om\kx-\omp\kp}{\om-\omp},\zj\right)\om^{\prime 2}\di\omp\di\kp\,.
\end{multline}
Accordingly, the lower left term $\smash{\Kls_{21}}=\smash{(1-\gamma)\dconv{\vpert}}$ yields:
\begin{multline}\label{eq:K21}
\TF{\Kls_{21}f}(\om,\kx,\zj)=\frac{1}{\rref(2\pi)^3}\iint\TF{{\boldsymbol{\mathcal K}}}_{21}(\omp,\kp)\TF{\vpert}(\omp,\kp,\zj) \\
\times\TF{f}\left(\om-\omp,\frac{\om\kx-\omp\kp}{\om-\omp},\zj\right)\om^{\prime 2}\di\omp\di\kp\,,
\end{multline}
where 
\begin{equation}
\smash{\TF{\boldsymbol{\mathcal K}}_{21}}(\om,\kx)=\smash{\ci\om\rref(\gamma-1)\bet(\kx)\Id_3},
\end{equation}
and the remaining lower right term $\smash{\Kls_{22}}=\smash{\frac{1}{\rref}}\smash{(\bnabla\cdot\vpert \Id_3 +\vpert\cdot\bnabla  +\Dx\vpert)}$ yields:
\begin{multline}\label{eq:K22}
\TF{\Kls_{22}{\boldsymbol f}}(\om,\kx,\zj)=\frac{1}{\rref(2\pi)^3}\iint\Bigg[\left(\partial_\zj\TF{\vpertj}_\zj(\omp,\kp,\zj)
+\TF{\vpertj}_\zj(\omp,\kp,\zj)\partial_{\zj}\right)\Id_3 \\
+\ci\om\kx\cdot\TF{\vpert}_\xv(\omp,\kp,\zj)\Id_3+\TF{\vpert}(\omp,\kp,\zj)\otimes\begin{pmatrix} \ci\omp\kp \\ 0 \end{pmatrix}+\partial_\zj\TF{\vpert}(\omp,\kp,\zj)\otimes\begin{pmatrix}\bzero \\ 1\end{pmatrix}\Bigg] \\
\times\TF{{\boldsymbol f}}\left(\om-\omp,\frac{\om\kx-\omp\kp}{\om-\omp},\zj\right)\om^{\prime 2}\di\omp\di\kp\,.
\end{multline}

We finally apply the foregoing formulas to the calculation of $\smash{\TF{\pv}^{(01)}}$ in \eref{eq:TFp01}. Here $\smash{\TF{\pv}^{(0)}}$ is given by \Pref{lem:no:pert}, \eref{eq:FFT-p0:gen}, which is considered for the case $\zjs=0^+$ from now on in view of the application of \cite{CAN75,CAN76a,CAN76b} we have in mind. In this situation the Green's function $\smash{\Grvref}$ is reduced to its downward contribution $\smash{\Grvrefd}$ in the expression of $\smash{\TF{\pv}^{(0)}}$ and $\smash{\TF{\pv}^{(01)}}$, as one can see on \fref{fg:2nd-order-terms}, and $\smash{\TF{\pv}^{(0)}}$ is such that:
\begin{displaymath}
\partial_\zj\TF{\pv}^{(0)}(\om,\kx,\zj)=-\ci\om\zet(\kx)\TF{\pv}^{(0)}(\om,\kx,\zj)\,.
\end{displaymath}
This allows us to replace $\smash{\partial_\zj}$ by $\smash{-\ci(\om-\omp)\zet(\frac{\om\kx-\omp\kp}{\om-\omp})}$ in the expressions of the Fourier transforms of $\smash{\Kj_{11}\pv^{(0)}}$ and $\smash{\Kls_{22}\pv^{(0)}}$ obtained with the above formulas. 

Gathering the foregoing definitions of \eref{eq:K11}, \eref{eq:K21}, and \eref{eq:K22} one arrives at:
\begin{multline*}
\TF{\pv}^{(01)}(\om,\kx,\zj)=\frac{\rref^2\iexp^{-\ci\om\zet(\kx)\zj}}{4(2\pi)^3}\bigg(\iiint\iexp^{\ci\om\slowness(\om,\kx,\omp,\kp)\zpj}\big(\kernelw(\om,\kx,\omp,\kp,\zpj) \\
+\dernelw(\om,\kx,\omp,\kp,\zpj)\big)\,\TF{\FLSg}^-\!\left(\om-\omp,\frac{\om\kx-\omp\kp}{\om-\omp}\right)\,\om^{\prime 2}\di\omp\di\kp\di\zpj\bigg)\,\tfgrvrefd(\kx)\,,
\end{multline*}
where $\smash{\kernelw}$ and $\smash{\dernelw}$ are the scalar functions:
\begin{equation}\label{eq:kernel}
\begin{split}
\kernelw(\om,\kx,\omp,\kp,\zpj) &=\tfgrvrefd(\kx)^\itr\Kms\TF{\Kls}(\om,\kx,\omp,\kp,\zpj)\tfgrvrefd\left(\frac{\om\kx-\omp\kp}{\om-\omp}\right)\,, \\
\dernelw(\om,\kx,\omp,\kp,\zpj) &=\tfgrvrefd(\kx)^\itr\Kms\TF{\Dls}(\om,\kx,\omp,\kp,\zpj)\tfgrvrefd\left(\frac{\om\kx-\omp\kp}{\om-\omp}\right)\,,
\end{split}
\end{equation}
$\smash{\Kms=\mathrm{diag}(\compref,\frac{1}{\rref},\frac{1}{\rref},\frac{1}{\rref})}$, and $\smash{\TF{\Kls}}$ and $\smash{\TF{\Dls}}$ are the $4\times 4$ matrices:
\begin{align}
\TF{\Kls}(\om,\kx,\omp,\kp,\zpj) &= \left(\kernelv_d\left(\om-\omp,\frac{\om\kx-\omp\kp}{\om-\omp}\right)\cdot\TF{\vpert}(\omp,\kp,\zpj)\right)\Id_4 \\
\nonumber
&\quad+\begin{bmatrix} 0 & \bzero \\ \TF{\boldsymbol{\kernel}}_{21}(\omp,\kp)\TF{\vpert}(\omp,\kp,\zpj) & \TF{\boldsymbol{\kernel}}_{22}(\omp,\kp)\TF{\vpert}(\omp,\kp,\zpj)\end{bmatrix}\,, \\
\TF{\Dls}(\om,\kx,\omp,\kp,\zpj) &= \begin{bmatrix} 0 & \bzero \\ \bzero & \TF{\boldsymbol{\dernel}}_{22}(\omp,\kp)\partial_{\zpj}\TF{\vpert}(\omp,\kp,\zpj) \end{bmatrix}\,,
\end{align}
with:
\begin{align*}
\TF{\boldsymbol{\kernel}}_{22}(\omp,\kp)\TF{\vpert}(\omp,\kp,\zpj) &=\begin{pmatrix} \ci\omp\kp \\ 0 \end{pmatrix}\cdot\TF{\vpert}(\omp,\kp,\zpj)\Id_3+\TF{\vpert}(\omp,\kp,\zpj)\otimes\begin{pmatrix} \ci\omp\kp \\ 0 \end{pmatrix}\,,  \\
\TF{\boldsymbol{\dernel}}_{22}(\omp,\kp)\partial_{\zpj}\TF{\vpert}(\omp,\kp,\zpj) &=\begin{pmatrix} \bzero \\ 1 \end{pmatrix}\cdot\partial_{\zpj}\TF{\vpert}(\omp,\kp,\zpj)\Id_3+\partial_{\zpj}\TF{\vpert}(\omp,\kp,\zpj)\otimes\begin{pmatrix} \bzero \\ 1 \end{pmatrix}\,.
\end{align*}
But by a straightforward computation:
\begin{equation*}
\kernelw(\om,\kx,\omp,\kp,\zpj)=\kernelv(\om,\kx,\omp,\kp)\cdot\TF{\vpert}(\omp,\kp,\zpj) \, ,
\end{equation*}
where $\kernelv$ is given by \eref{eq:kernelv}. Likewise:
\begin{equation*}
\dernelw(\om,\kx,\omp,\kp,\zpj)=\dernelv(\om,\kx,\omp,\kp)\cdot\partial_{\zpj}\TF{\vpert}(\omp,\kp,\zpj) \, ,
\end{equation*}
where $\dernelv$ is given by \eref{eq:dernelv}. Integrating by parts in $\zpj$ one has:
\begin{multline*}
\int\iexp^{\ci\om\slowness(\om,\kx,\omp,\kp)\zpj}\dernelw(\om,\kx,\omp,\kp,\zpj)\di\zpj= \\
-\ci\om\slowness(\om,\kx,\omp,\kp)\int\iexp^{\ci\om\slowness(\om,\kx,\omp,\kp)\zpj}\dernelv(\om,\kx,\omp,\kp)\cdot\TF{\vpert}(\omp,\kp,\zpj)\di\zpj\,,
\end{multline*}
which, when combined with the foregoing expression of $\smash{\kernelw}$, gives the claimed result.
\end{proof}

\section{Computation of the power spectral density} \label{sec:PSD}

Our aim is to compare the foregoing analytical model with the measurements of \cite{CAN75,CAN76a,CAN76b}. Here the experimental results are presented in terms of the PSD or mean-square Fourier transform of the pressure field recorded at the interface of a free shear flow when an acoustic pulse is imposed at its opposite interface; see again \fref{fg:rand_flow}. The PSD is the Fourier transform of the auto-correlation function (by Wiener-Khintchin theorem). Considering the perturbative model (\ref{eq:p2}) elaborated in the previous section, the mean-square Fourier transforms are computed as:
\begin{multline}\label{eq:SB0}
\esp{\TF{\pv}^{(2)}(\omu,\ku,\zuj)\otimes\cjg{\TF{\pv}^{(2)}(\omd,\kd,\zdj)}} = \TF{\pv}^{(0)}(\omu,\ku,\zuj)\otimes\cjg{\TF{\pv}^{(0)}(\omd,\kd,\zdj)} \\
+\scale^2\TF{\pv}^{(0)}(\omu,\ku,\zuj)\otimes\esp{\cjg{\TF{\pv}^{(02)}(\omd,\kd,\zdj)}}+ \scale^2\esp{\TF{\pv}^{(02)}(\omu,\ku,\zuj)}\otimes\cjg{\TF{\pv}^{(0)}(\omd,\kd,\zdj)} \\
+\scale^2\esp{\TF{\pv}^{(01)}(\omu,\ku,\zuj)\otimes\cjg{\TF{\pv}^{(01)}(\omd,\kd,\zdj)}} \,,
\end{multline}
where $\cjg{Z}$ stands for the complex conjugate of $Z$. Indeed, $\smash{\pv^{(0)}}$ is deterministic and $\esp{\smash{\TF{\pv}^{(01)}}}=\bzero$ because by \Pref{Prop:p01} $\pv^{(01)}$ is linear with respect to $\smash{\vpert}$, which is such that $\smash{\esp{\vpert}} = \bzero$ by \eref{eq:m_V}.

From now on we assume that the sources $\FLS(t,\xv)$ are time-harmonic forcing terms emitting at the frequency $\smash{\omref}$, which means that:
\begin{equation}\label{eq:S-harmo}
\TF{\FLS}(\om,\kx)=\TF{\FLS}_\iref(\kx)\dirac(\om-\omref)\,,
\end{equation}
and therefore $\smash{\TF{\pv}^{(0)}(\om,\kx,\zj)}=\smash{\TF{\Pv}^{(0)}(\kx,\zj)\dirac(\om-\omref)}$, where $\smash{\TF{\Pv}^{(0)}}$ is deduced straightforwardly from \Pref{lem:no:pert}:
\begin{equation}\label{eq:def_P0}
\TF{\Pv}^{(0)}(\kx,\zj) = \demi\iexp^{-\ci\omref\zet(\kx)\zj}\rref\,\tfgrvrefd(\kx)\otimes\tfgrvrefd(\kx)\,\TF{\FLS}_\iref(\kx)\,.
\end{equation}
Consequently, the first, second, and third terms on the right-hand side of \eref{eq:SB0} are concentrated at the frequency $\smash{\omref}$
of the source in the frequency domain. 
The spectral broadening effect described in \cite{CAN75,CAN76a,CAN76b} should therefore be explained by the last term on the right-hand side of \eref{eq:SB0}. 
Thus this effect stems from the PSD of $\smash{\pv^{(01)}}$ with the perturbative model (\ref{eq:p2}). 
The subsequent analyses are focused on the computation of this additional contribution.

We carefully isolate the ``slow" part of the forcing terms $\smash{\TF{\FLS}_\iref(\kx)}$, which is denoted by $\smash{\TF{\FLS}_{\iref,\islow}(\kx)}$, from its ``fast" (highly oscillating) part, which is essentially a phase term $\exp(-\ci\omref\kx\cdot\xvs)$ where $\xvs$ is the horizontal central position of the sources and $\omref$ is the emitting (high) frequency of \eref{eq:S-harmo}; that is:
\begin{equation}\label{eq:S-slow}
\TF{\FLS}_\iref(\kx)=\iexp^{-\ci\omref\kx\cdot\xvs}\TF{\FLS}_{\iref,\islow}(\kx)\,.
\end{equation}
Such an ansatz will prove useful for the application of a stationary-phase argument in the following derivations. We can now state the main result of this section, which gives the expression of the covariance matrix of the single-scattered wave.

\begin{proposition}
The covariance matrix of $\smash{\TFk{\pv}^{(01)}}$:
\begin{equation}
\SB^{(01)}(\omu,\xv_1,\zuj,\omd,\xv_2,\zdj) = \esp{\TFk{\pv}^{(01)}(\omu,\xv_1,\zuj)\otimes\cjg{\TFk{\pv}^{(01)}(\omd,\xv_2,\zdj)}}
\end{equation}
along the vertical line $\xv_1=\xv_2={\bf 0}$ has the following form when $L \to 0$:
\begin{multline}\label{eq:PSD01-bis}%2pi
\SB^{(01)}(\omu,\bzero,\zuj,\omd,\bzero,\zdj) = \dirac(\omu-\omd)\frac{\rref^4(\omu-\omref)^2\om_1^4}{16(2\pi)^7} \\
\times\iiint\,\di\kx\di\ku\di\kd\iexp^{\ci\omref\phase_\iref(\ku,\kd)}\kernelturb(\omu-\omref,\kx;\omu,\ku,\omu,\kd) \\
\times\TF{\FLSg}^-_{\iref,\normalfont{\islow}}\left(\frac{\omu}{\omref}\ku+\left(1-\frac{\omu}{\omref}\right)\kx\right)\cjg{\TF{\FLSg}^-_{\iref,\normalfont{\islow}}\left(\frac{\omu}{\omref}\kd+\left(1-\frac{\omu}{\omref}\right)\kx\right)}\,\tfgrvrefd(\ku)\otimes\tfgrvrefd(\kd)\,,
\end{multline}
where $\phase_\iref$ is the overall phase of the transmitted signals:
\begin{align}
\phase_\iref( \ku,\kd)=
&-\frac{\omu}{\omref}\zuj\zet(\ku)+\frac{\omu}{\omref} \zdj \zet(\kd)-\frac{\omu}{\omref}\xvs\cdot\ku+\frac{\omu}{\omref}\xvs\cdot\kd \, ,
\label{eq:phase0}
\end{align}
$\TF{\FLSg}^-_{\iref,\normalfont{\islow}}$ is the generalized coordinate of the "slow" components of the forcing terms:
\begin{equation}
\label{eq:Sg-slow}
\TF{\FLSg}^-_{\iref,\normalfont{\islow}}(\kx)= \tfgrvrefd(\kx)\cdot\TF{\FLS}_{\iref,\normalfont{\islow}}(\kx) \, ,
\end{equation}
$\tfgrvrefd$ is defined by (\ref{eq:eigenmodes}),
the kernel $\kernelturb$  plays a fundamental in the spectral broadening:
\begin{equation}
\label{def:kernelturb}
\kernelturb(\om,\kx;\omu,\ku,\omd,\kd)=\cernelv(\omu,\ku,\om,\kx)^\itr\PSD\left(\om(1-\kx\cdot\vst)\right)\cjg{\cernelv(\omd,\kd,\om,\kx)}\,,
\end{equation}
$\cernelv$ is defined by (\ref{def:hatc}), 
and $\PSD$ is the matrix-valued spectral density of the velocity perturbations:
\begin{equation}
\label{eq:PSD_V} 
\PSD(\om)=\int\ACF(\tau)\iexp^{\ci\om\tau}\di \tau\,.
\end{equation}
\end{proposition}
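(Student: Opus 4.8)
The plan is to take the single-scattering representation of \Pref{Prop:p01} as the starting point, specialize it to the harmonic source, and evaluate the second moment directly, the only randomness entering linearly through $\vpert$. First I would substitute the time-harmonic constraint \eref{eq:S-harmo} into \eref{eq:p-01-TF}: the generalized coordinate $\TF{\FLSg}^-(\om-\omp,\cdot)$ then carries a factor $\dirac(\om-\omp-\omref)$, which collapses the $\omp$-integral and fixes $\omp=\om-\omref$, so that $\om-\omp=\omref$ and the second wavenumber argument reduces to $\frac{\om\kx-(\om-\omref)\kp}{\omref}$. Inserting the slow/fast splitting \eref{eq:S-slow} produces the factor $\TF{\FLSg}^-_{\iref,\islow}\bigl(\frac{\om\kx-(\om-\omref)\kp}{\omref}\bigr)$ together with a source-position phase $\iexp^{-\ci(\om\kx-(\om-\omref)\kp)\cdot\xvs}$. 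Applying the inverse slowness transform \eref{eq:FFTk} then gives a closed form for $\TFk{\pv}^{(01)}(\om,\xv,\zj)$ that is linear in $\TF{\vpert}(\om-\omref,\cdot,\cdot)$.

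Next I would form $\TFk{\pv}^{(01)}(\omu,\xv_1,\zuj)\otimes\cjg{\TFk{\pv}^{(01)}(\omd,\xv_2,\zdj)}$ and take the expectation. Since each copy is linear in $\vpert$ and $\esp{\vpert}=\bzero$, only the second moment of $\TF{\vpert}$ survives, which I would compute from \eref{eq:R_V}: writing out \eref{eq:FFTh}, changing to $\tau=t_1-t_2$ and resolving the spatial delta $\dirac(\xv_1-\xv_2-\tau\vst)$, the $t$- and $\xv$-integrals give $2\pi\dirac(\omp-\oms)$ and $(2\pi)^2\dirac(\omp(\kp-\ks))$, the $\tau$-integral gives $\PSD(\omp(1-\kp\cdot\vst))$ by \eref{eq:PSD_V}, and the layer profile leaves the vertical factor, namely
\begin{multline*}
\esp{\TF{\vpert}(\omp,\kp,\zupj)\otimes\cjg{\TF{\vpert}(\oms,\ks,\zdpj)}} \\
=(2\pi)^3\dirac(\omp-\oms)\,\dirac(\omp(\kp-\ks))\,\PSD(\omp(1-\kp\cdot\vst))\,\dirac(\zupj-\zdpj)\frac{\indic_{[-L,0]}(\zupj)}{L}\,.
\end{multline*}
With the source constraints $\omp=\omu-\omref$, $\oms=\omd-\omref$, the frequency delta becomes $\dirac(\omu-\omd)$; the wavevector delta identifies the two perturbation wavenumbers into a single $\kx$, contributing a Jacobian $(\omu-\omref)^{-2}$ in two dimensions and collapsing one $\kp$-integral; and the $z$-delta collapses $\zdpj$.

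It then remains to recombine the deterministic factors. Contracting $\cernelv(\omu,\ku,\omu-\omref,\kx)$ and $\cjg{\cernelv(\omd,\kd,\omd-\omref,\kx)}$ against the surviving $\PSD$-matrix is precisely the definition \eref{def:kernelturb} of $\kernelturb(\omu-\omref,\kx;\omu,\ku,\omu,\kd)$, while the prefactors $\tfgrvrefd(\ku)$ and $\tfgrvrefd(\kd)$---real-valued in the propagating regime, hence untouched by conjugation---yield $\tfgrvrefd(\ku)\otimes\tfgrvrefd(\kd)$. Collecting the constants $\bigl(\tfrac{\rref^2}{4(2\pi)^3}\bigr)^2$, the two transform factors $\tfrac{\omu^2}{(2\pi)^2}$, the measure factor $(\omu-\omref)^2$ from each copy, and the $(2\pi)^3$ and Jacobian $(\omu-\omref)^{-2}$ of the second moment reproduces the prefactor $\tfrac{\rref^4(\omu-\omref)^2\om_1^4}{16(2\pi)^7}$. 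For the phase I would set $\xv_1=\xv_2=\bzero$, so the transform kernels $\iexp^{\pm\ci\om\kx\cdot\xv}$ are unity; the surviving exponentials are the vertical phases $\iexp^{-\ci\omu\zet(\ku)\zuj}$ and $\iexp^{+\ci\omu\zet(\kd)\zdj}$ from \eref{eq:p-01-TF} and the two source-position phases, whose $\kx$-dependent parts cancel because the two perturbation wavenumbers coincide, leaving $\iexp^{-\ci\omu(\ku-\kd)\cdot\xvs}$. Together these assemble exactly into $\iexp^{\ci\omref\phase_\iref(\ku,\kd)}$ with $\phase_\iref$ given by \eref{eq:phase0}.

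Finally I would pass to the limit $L\to0$. After the $z$-delta the only residual $z$-integral is $\int_{-L}^0\frac1L\iexp^{\ci\omu(\slowness(\omu,\ku,\omu-\omref,\kx)-\slowness(\omu,\kd,\omu-\omref,\kx))\zpj}\di\zpj$; the two slownesses are real in the propagating regime by \eref{eq:phase1}, and since $|\zpj|\le L$ the integrand tends to $1$ uniformly, so this integral tends to $1$ and all $z$-dependence freezes at $\zpj=0$. The three remaining integrals over $\kx,\ku,\kd$ are exactly those of \eref{eq:PSD01-bis}, which completes the identification. I expect the main obstacle to be the delta-function bookkeeping together with the phase cancellation: one must verify that the harmonic-source delta ($\omp=\om-\omref$) and the correlation-induced deltas ($\omu=\omd$, $\kp=\ks$) are mutually consistent, that the Jacobian from $\dirac(\omp(\kp-\ks))$ combines with the $\omp^2$ measure factors to give the stated power of $(\omu-\omref)$, and that the $\kx$-dependent source phases cancel so that only the $\ku,\kd$ phases of $\phase_\iref$ survive; by comparison the $L\to0$ passage is benign.
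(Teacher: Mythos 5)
Your proposal is correct and follows essentially the same route as the paper's proof: both insert the representation of \Pref{Prop:p01} into the second moment, compute $\esp{\TF{\vpert}\otimes\cjg{\TF{\vpert}}}$ from \eref{eq:R_V} (yielding the same $(2\pi)^3\dirac(\omu-\omd)\dirac(\cdot)\PSD(\cdot)$ structure), exploit the harmonic source \eref{eq:S-harmo} and the splitting \eref{eq:S-slow}, apply the inverse slowness transform \eref{eq:FFTk} at $\xv_1=\xv_2=\bzero$, and let $L\to 0$ so the residual $\zpj$-integral tends to one. The only differences are cosmetic --- you collapse the source delta before forming the expectation whereas the paper does it after, and you explicitly justify writing $\tfgrvrefd(\ku)\otimes\tfgrvrefd(\kd)$ without conjugation (realness of $\zet$ in the propagating regime), a point the paper leaves implicit --- and your delta/Jacobian bookkeeping correctly reproduces the prefactor $\rref^4(\omu-\omref)^2\om_1^4/(16(2\pi)^7)$.
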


\begin{proof}
We consider the covariance matrix of $\TF{\pv}^{(01)}$:
\begin{equation} \label{eq:PSD:p2}
\HSB^{(01)}(\omu,\ku,\zuj,\omd,\kd,\zdj) = \esp{\TF{\pv}^{(01)}(\omu,\ku,\zuj)\otimes\cjg{\TF{\pv}^{(01)}(\omd,\kd,\zdj)}}\,.
\end{equation}
From the result (\ref{eq:p-01-TF}) of \Pref{Prop:p01} one has:
\begin{multline}\label{eq:SB1}
\HSB^{(01)}(\omu,\ku,\zuj,\omd,\kd,\zdj) = \rref^4 \frac{\iexp^{-\ci\omu\zet(\ku)\zuj+\ci\omd\zet(\kd)\zdj}}{16(2\pi)^6} \times \\
\bigg[ \iiint\om_1^{\prime 2}\di\omup\di\kup\di\zupj\iiint\om_2^{\prime 2}\di\omdp\di\kdp\di\zdpj \,\iexp^{\ci\omu\slowness(\omu,\ku,\omup,\kup)\zupj-\ci\omd\slowness(\omd,\kd,\omdp,\kdp)\zdpj} \\
\cernelv(\omu,\ku,\omup,\kup)^\itr\esp{\TF{\vpert}(\omup,\kup,\zupj)\otimes\cjg{\TF{\vpert}(\omdp,\kdp,\zdpj)}}\cjg{\cernelv(\omd,\kd,\omdp,\kdp)} \\
\TF{\FLSg}^-\left(\omu-\omup,\frac{\omu\ku-\omup\kup}{\omu-\omup}\right)\cjg{\TF{\FLSg}^-\left(\omd-\omdp,\frac{\omd\kd-\omdp\kdp}{\omd-\omdp}\right)}\bigg]\,\tfgrvrefd(\ku)\otimes\tfgrvrefd(\kd)\,,
\end{multline}
where the slowness function $\slowness$ is given by \eref{eq:phase1}. Because the auto-correlation matrix function of $\vpert$ is given by \eref{eq:R_V}, the covariance matrix of $\smash{\TF{\vpert}}$ reads:
\begin{multline*}
\esp{\TF{\vpert}(\omu,\ku,\zuj)\otimes\cjg{\TF{\vpert}(\omd,\kd,\zdj)}} = \dirac(\zuj-\zdj) \frac{\indic_{[-L,0]}(\zuj)}{L} \\
\times\iint \ACF(t_1 - t_2) \dirac(\xv_1 - \xv_2 - (t_1-t_2)\vst) \iexp^{\ci\omu(t_1-\ku \cdot \xv_1) - \ci\omd(t_2 - \kd \cdot \xv_2)} \di t_1 \di t_2 \di \xv_1 \di \xv_2 
\end{multline*}
which by the changes of variables $\tau=\smash{t_1-t_2}$, $t=\smash{\demi(t_1+t_2)}$, $\BS{\rho}=\smash{\xv_1-\xv_2}$, and $\xv=\smash{\demi(\xv_1+\xv_2)}$ also reads:
\begin{multline}\label{esp:v}
\esp{\TF{\vpert}(\omu,\ku,\zuj)\otimes\cjg{\TF{\vpert}(\omd,\kd,\zdj)}} = \\
(2\pi)^3\dirac(\omu-\omd)\dirac(\omu\ku-\omd\kd)\dirac(\zuj-\zdj) \frac{\indic_{[-L,0]}(\zj_1)}{L}\PSD(\omu(1-\ku\cdot\vst))\,,
\end{multline}
where $\PSD(\om)$ is defined by (\ref{eq:PSD_V}).
Inserting \eref{esp:v} into \eref{eq:SB1} one arrives at the desired result 
with the change of variable $\omup \rightarrow \om$, $\kup \rightarrow \kx$.

But $\smash{\TF{\FLS}}$ is given by \eref{eq:S-harmo}, so that letting $L\to 0$ yields:
\alert{
\begin{multline*}
\HSB^{(01)}(\omu,\ku,\zuj,\omd,\kd,\zdj) = \\
\rref^4\frac{\iexp^{-\ci\omu\zet(\ku)\zuj+\ci\omd\zet(\kd)\zdj}}{16(2\pi)^3}\mathcal {I}(\omu,\ku,\omd,\kd)\tfgrvrefd(\ku)\otimes\tfgrvrefd(\kd)\,,
\end{multline*}
where:}
\begin{multline}\label{eq:I1}
\mathcal {I}(\omu,\ku,\omd,\kd) = \dirac(\omu-\omd)(\omu-\omref)^2  \int
\kernelturb(\omu-\omref,\kx;\omu,\ku,\omu,\kd) \\
\times\TF{\FLSg}^-_\iref\left(\frac{\omu}{\omref}\ku+\left(1-\frac{\omu}{\omref}\right)\kx\right)\cjg{\TF{\FLSg}^-_\iref\left(\frac{\omu}{\omref}\kd+\left(1-\frac{\omu}{\omref}\right)\kx\right)}\di\kx\,\,,
\end{multline}
and $\smash{\TF{\FLSg}^-_\iref}(\kx)=\smash{\tfgrvrefd(\kx)\cdot\TF{\FLS}_\iref(\kx)}$. Lastly, we apply the inverse Fourier transform (\ref{eq:FFTk}) with respect to $\ku$ and $\kd$ to the foregoing result in order to get an expression 
of the correlation of the transmitted fields at two points $(\xv_1,z_1)$ and $(\xv_2,z_2)$. 
The covariance matrix of $\smash{\TFk{\pv}^{(01)}}$ (remind the partial Fourier transform of \eref{eq:FFTk}) thus reads:
\begin{multline*}%2pi
\SB^{(01)}(\omu,\xv_1,\zuj,\omd,\xv_2,\zdj) = \frac{\rref^4\om_1^2\om_2^2}{16(2\pi)^7}\times \\
 \iint\iexp^{\ci\omu(\ku\cdot\xv_1-\zet(\ku)\zuj)-\ci\omd(\kd\cdot\xv_2-\zet(\kd)\zdj)}\,\mathcal {I}(\omu,\ku,\omd,\kd)\,\tfgrvrefd(\ku)\otimes\tfgrvrefd(\kd) \di\ku\di\kd\,.
\end{multline*}
Accounting for (\ref{eq:I1}) together with the ansatz (\ref{eq:S-slow}) and eventually considering 
the covariance on the vertical line $\xv_1=\xv_2=\bzero$ yields the desired result.
\end{proof}

In view of \eref{eq:SB0}, the PSD of the ballistic pressure/momentum fields $\smash{\pv^{(0)}}$ is also needed. But these quantities are, again, deterministic and time-harmonic at the frequency $\smash{\omref}$, so that one simply computes\string:
\begin{equation}\label{eq:PSD0}
\begin{split}
\SB^{(0)}({\bf 0} ,\zuj,{\bf 0},\zdj) &=\Pv^{(0)}({\bf 0},\zuj)\otimes\cjg{\Pv^{(0)}({\bf 0},\zdj)} \\
 &=\frac{\rref^2\omref^4}{4(2\pi)^4}\iint\di\ku\di\kd\iexp^{-\ci\omref[\zet(\ku)\zuj-\zet(\kd)\zdj+ \ku\cdot \xvs- \kd\cdot \xvs]} \\
&\qquad\qquad\qquad\times\TF{\FLSg}^-_{\iref,\islow}(\ku)\cjg{\TF{\FLSg}^-_{\iref,\islow}(\kd)}\,\tfgrvrefd(\ku)\otimes\tfgrvrefd(\kd)\,,
\end{split}
\end{equation}
reminding the definition (\ref{eq:def_P0}). We finally invoke a stationary-phase argument to conclude on the derivation of the PSD of the transmitted pressure field, with the foregoing expression of the phase. This last step is detailed in the next section.

\section{Stationary-phase method} \label{sec:sta:pha}

In this section we start by considering the case $\Mach=\smash{\norm{\vm}/\celref}\ll 1$ (Mach number of the ambient flow) and eventually choose $\smash{\vm}\simeq\bzero$ (\alert{$\Mach\simeq 0$ and} $\bet\simeq 1$) to simplify the derivation below. The case \alert{$\Mach\neq 0$ but small} will be addressed in subsequent developments. The first goal is to determine the stationary slowness vectors  $\smash{\kspu}$ and $\smash{\kspd}$ such that:
\begin{equation*}
\bnabla_{\ku}\phase_\iref( \kspu,\kspd)=\bnabla_{\kd}\phase_\iref( \kspu,\kspd)=\bzero\,.
\end{equation*}

\begin{lemma}
In the instance that $\smash{\omu}$ and $\smash{\omd}$ are equal to $\smash{\omref}$ at the first order, namely $1-\smash{\frac{\omu}{\omref}}=\po(1)$ and $1-\smash{\frac{\omd}{\omref}}=\po(1)$, one has:
\begin{equation}\label{lem:ksp12}
\normalfont{\kspu}\simeq \zet(\normalfont{\kspu})\frac{\xvs}{\zuj}=\frac{\xvs}{\celref\sqrt{\norm{\xvs}^2+\zuj^2}}\,, \quad \quad
\normalfont{\kspd}\simeq \zet(\normalfont{\kspd})\frac{\xvs}{\zdj}=\frac{\xvs}{\celref\sqrt{\norm{\xvs}^2+\zdj^2}}\,.
\end{equation}
\end{lemma}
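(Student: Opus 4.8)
The plan is to locate the stationary points directly from the explicit phase \eref{eq:phase0} by setting its two slowness gradients to zero, exploiting the fact that the whole phase carries a common prefactor $\frac{\omu}{\omref}$, so that the stationarity conditions decouple into one equation for $\ku$ and one for $\kd$.

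First I would record the simplified dispersion relation. Since we work in the regime $\Mach\simeq 0$, hence $\bet\simeq 1$, \eref{eq:zetm-bis} reduces to $\zet(\kx)\simeq\sqrt{\celref^{-2}-\norm{\kx}^2}$, from which implicit differentiation of $\zet(\kx)^2=\celref^{-2}-\norm{\kx}^2$ gives $\bnabla_\kx\zet(\kx)=-\kx/\zet(\kx)$. The hypothesis $1-\frac{\omu}{\omref}=\po(1)$ (and likewise for $\omd$) is what legitimizes evaluating the slowness arguments at the emission frequency and working with the leading-order phase \eref{eq:phase0}; it does not otherwise enter the stationary-point equations, because the common factor $\frac{\omu}{\omref}$ cancels in the steps below.

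Next I would differentiate \eref{eq:phase0}. Since $\phase_\iref$ depends on $\ku$ only through $-\frac{\omu}{\omref}\zuj\zet(\ku)-\frac{\omu}{\omref}\xvs\cdot\ku$, one obtains
$$\bnabla_{\ku}\phase_\iref=-\frac{\omu}{\omref}\left(\zuj\,\bnabla_{\ku}\zet(\ku)+\xvs\right),$$
and symmetrically $\bnabla_{\kd}\phase_\iref=\frac{\omu}{\omref}\left(\zdj\,\bnabla_{\kd}\zet(\kd)+\xvs\right)$. As $\frac{\omu}{\omref}\neq 0$, stationarity amounts to $\bnabla_{\ku}\zet(\kspu)=-\xvs/\zuj$ and $\bnabla_{\kd}\zet(\kspd)=-\xvs/\zdj$. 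Substituting the gradient formula from the first step turns these into the implicit relations $\kspu=\zet(\kspu)\,\xvs/\zuj$ and $\kspd=\zet(\kspd)\,\xvs/\zdj$, which already reproduce the first equality in each line of \eref{lem:ksp12}.

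Finally I would solve these implicit relations explicitly. Taking the Euclidean norm of $\kspu=\zet(\kspu)\xvs/\zuj$ gives $\norm{\kspu}^2=\zet(\kspu)^2\norm{\xvs}^2/\zuj^2$; inserting this into $\zet(\kspu)^2=\celref^{-2}-\norm{\kspu}^2$ and solving yields $\zet(\kspu)^2=\zuj^2/\left(\celref^2(\norm{\xvs}^2+\zuj^2)\right)$, and hence $\zet(\kspu)=\zuj/(\celref\sqrt{\norm{\xvs}^2+\zuj^2})$. Back-substitution then gives $\kspu=\xvs/(\celref\sqrt{\norm{\xvs}^2+\zuj^2})$, and the identical computation with $\zdj$ delivers $\kspd$. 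The algebra is elementary; the only genuinely delicate point is the branch of the square root defining $\zet$, which must be selected consistently with the sign convention fixed in the Remark following \Pref{lem:no:pert}. Fortunately the implicit relation ties the sign of $\zet(\kspu)$ to that of $\zuj$, so that the quotient $\zet(\kspu)/\zuj$ is positive and the final expression $\kspu=\xvs/(\celref\sqrt{\norm{\xvs}^2+\zuj^2})$ is sign-unambiguous.
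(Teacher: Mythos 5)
Your proposal is correct and follows essentially the same route as the paper's own proof: compute $\bnabla_{\ku}\phase_\iref$ and $\bnabla_{\kd}\phase_\iref$ using $\bnabla_\kx\zet(\kx)=-\kx/\zet(\kx)$ (valid since $\bet\simeq 1$), cancel the nonzero prefactor $\omu/\omref$, and read off the implicit relations $\kspu=\zet(\kspu)\xvs/\zuj$ and $\kspd=\zet(\kspd)\xvs/\zdj$. The only difference is that you additionally spell out the algebra converting the implicit relations into the explicit closed forms and flag the square-root branch issue, both of which the paper leaves implicit.
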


\begin{proof}
$\zet(\kx)$ is given by \eref{eq:zetm-bis} with $\bet(\kx)=1$, therefore $\nabk\zet(\kx)=-\kx/\zet(\kx)$. Thus:
\begin{equation}
\label{eq:grad-phi}
\bnabla_{\ku}\phase_\iref(\ku,\kd) = \frac{\omu}{\omref}\left[\zuj \frac{\ku}{\zet(\ku)}-\xvs\right]\,, \quad \quad 
\bnabla_{\kd}\phase_\iref(\ku,\kd) = -\frac{\omu}{\omref}\left[\zdj \frac{\kd}{\zet(\kd)}-\xvs\right] \,.
\end{equation}
Then for $\smash{\frac{\omu}{\omref}}=\Go(1)$ one obtains the claimed result.
\end{proof}

We subsequently apply the stationary-phase theorem to \eref{eq:PSD01-bis} to obtain the final expression of the PSD of the scattered transmitted pressure field given in \Pref{Prop:PSD:p2}, which is the main result of the paper.

\begin{proposition} \label{Prop:PSD:p2}
Let $\dist=\smash{(\norm{\xvs}^2+\zj^2)^\demi}$ be the distance from the time-harmonic sound source (\ref{eq:S-harmo}) at $(\xvs,0^+)$ (with $\smash{\TF{\FLS}_\iref}=\smash{(\TF{\Hj}_\iref,\TF{\Fv}_{\iref\xv},\TF{\Fj}_{\iref\zj})}$) to the observation point $(\bzero,\zj)$ at the depth $\zj$ ($\zj<-L$) on the outer side of the flow. Then the ballistic transmitted pressure field $\smash{\pres'^{(0)}}=\smash{\compref\qj^{(0)}}$ has the form:
\begin{equation}
\TFk{\pres}'^{(0)}(\om,\bzero,\zj)\cjg{\TFk{\pres}'^{(0)}(\omp,\bzero,\zj)}=\delta(\om-\omref)\delta(\omp-\omref)\SBj_\iref(\xvs,\zj,\omref) \, ,
\end{equation}
with\string:
\begin{equation}\label{eq:PSD0final}
\SBj_0(\xvs,\zj,\omref)=\frac{\rref^2\omref^2}{4(2\pi)^2\celref^2\dist^4}\normu{\frac{\dist}{\rref\celref}
\TF{\Hj}_\iref\left(\frac{\xvs}{\celref\dist}\right)+\xvs\cdot\TF{\Fv}_{\iref\xv}\left(\frac{\xvs}{\celref\dist}\right)
-\zj\TF{\Fj}_{\iref\zj}\left(\frac{\xvs}{\celref\dist}\right)}^2\,,
\end{equation}
and for $\smash{1-\frac{\om}{\omref}}=\po(1)$ the mean-square Fourier transform of the scattered transmitted pressure field $\pres'^{(01)}=\smash{\compref\qj^{(01)}}$ reads:  
\begin{equation}
\esp{\TFk{\pres}'^{(01)}(\om,\bzero,\zj)\cjg{\TFk{\pres}'^{(01)}(\omp,\bzero,\zj)}}=\delta(\om-\omp)\SBj_{01}(\xvs,\zj,\om,\omref) \, ,
\end{equation}
where:
\begin{equation}
\label{eq:PSDfinal}%2pi
\SBj_{01}(\xvs,\zj,\om,\omref)=\frac{\rref^2\omref^2}{4(2\pi)^3}\left(1-\frac{\om}{\omref}\right)^2\left(\frac{\om}{\omref}\right)^4\,\SBj_\iscat(\xvs,\zj,\om,\omref)\SBj_\iref\left(\xvs,\zj,\omref\right)\,,
\end{equation}
and $\smash{\SBj_\iscat(\xvs,\zj,\om,\omref)}$ is a filtered frequency spectrum responsible for the spectral broadening of the source centered about $\omref$ given by:
\begin{equation}\label{eq:Sigmaw_iscat}
\SBj_\iscat(\xvs,\zj,\om,\omref)=\int_{\norm{\kx}\leq\frac{1}{\celref}} \kernelturb\left(\om-\omref,\kx;\om, \frac{\xvs}{\celref\dist},\om,\frac{\xvs}{\celref\dist}\right)\,\di\kx\,,
\end{equation}
where $\kernelturb$ is defined by (\ref{def:kernelturb}).
\end{proposition}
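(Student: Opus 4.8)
The plan is to derive both formulas by applying the stationary-phase theorem in the outgoing slowness variables to the two covariance expressions already in hand: \eref{eq:PSD0} for the ballistic field and \eref{eq:PSD01-bis} for the singly-scattered field. The key observation is that both integrals carry the \emph{same} rapidly oscillating phase $\omref\phase_\iref$ in $\ku$ and $\kd$, whose stationary points were already located in the preceding lemma, \eref{lem:ksp12}. Consequently the two computations share the same stationary point $\kspu\simeq\kspd\simeq\frac{\xvs}{\celref\dist}$ and the same Hessian, and this common ray geometry is exactly what will force $\SBj_{01}$ to factor as the product of the filtered spectrum $\SBj_\iscat$ and the ballistic PSD $\SBj_\iref$.

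For the ballistic term I would first extract the pressure component by multiplying the $(1,1)$ entry of the tensor in \eref{eq:PSD0} by $\compref^2$; since the first component of $\tfgrvrefd(\kx)$ is $1/(\compref\sqrt{\zet(\kx)})$ and $\zuj=\zdj=\zj$, the double $\ku,\kd$ integral factors as $\smash{|I|^2}$ with $I=\int\iexp^{-\ci\omref[\zet(\kx)\zj+\kx\cdot\xvs]}\TF{\FLSg}^-_{\iref,\islow}(\kx)\zet(\kx)^{-\demi}\di\kx$. Stationary phase with large parameter $\omref$ localizes $I$ at $\ksp=\frac{\xvs}{\celref\dist}$, where $\zet(\ksp)=|\zj|/(\celref\dist)$. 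Using $\nabk\zet=-\kx/\zet$ one finds $\Hessian=\frac{\zj}{\zet}\left(\Id+\frac{\xvs\otimes\xvs}{\zj^2}\right)$, whose determinant has magnitude $\celref^2\dist^4/\zj^2$. Inserting the amplitude $\TF{\FLSg}^-_{\iref,\islow}(\ksp)$ from \eref{eq:FLSg}, \eref{eq:Sg-slow} (the unit-modulus phase $\iexp^{-\ci\omref\kx\cdot\xvs}$ of \eref{eq:S-slow} drops out under $|\cdot|^2$, so $\TF{\FLSg}^-_{\iref,\islow}$ may be replaced by $\TF{\FLSg}^-_\iref$) and simplifying then yields exactly \eref{eq:PSD0final}.

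For the scattered term I would again pass to the pressure component of \eref{eq:PSD01-bis} via $\compref^2$ times the $(1,1)$ entry, keeping the prefactor $\dirac(\omu-\omd)$. Because $\phase_\iref(\ku,\kd)$ separates additively into a $\ku$-piece and a $\kd$-piece, the stationary-phase theorem applies independently in each variable, localizing both $\ku$ and $\kd$ at the \emph{same} point $\frac{\xvs}{\celref\dist}$ with the Hessian just computed (the factor $\omu/\omref$ is $\Go(1)$ and may be set to $1$ at leading order). The crucial simplification is the assumption $1-\frac{\om}{\omref}=\po(1)$: the arguments $\frac{\omu}{\omref}\ku+(1-\frac{\omu}{\omref})\kx$ of the forcing terms collapse to $\frac{\xvs}{\celref\dist}$ \emph{independently of} $\kx$, so that $\smash{|\TF{\FLSg}^-_{\iref,\islow}(\frac{\xvs}{\celref\dist})|^2}$ factors out of the surviving $\kx$ integral. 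What remains is $\int\kernelturb(\om-\omref,\kx;\om,\frac{\xvs}{\celref\dist},\om,\frac{\xvs}{\celref\dist})\di\kx$, restricted to the propagating region $\norm{\kx}\leq\frac{1}{\celref}$ where $\zet$ is real (\emph{cf.} the remark after \Pref{lem:no:pert}), which is precisely $\SBj_\iscat$ of \eref{eq:Sigmaw_iscat}.

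Finally I would assemble the constants. The two stationary-phase evaluations contribute $(2\pi/\omref)^2$, two Hessian factors $|\zj|/(\celref\dist^2)$, and the eigenvector factor $1/(\compref^2\zet(\ksp))$; the signature phases $\iexp^{\pm\ci\pi/4}$ cancel between the $\ku$ and $\kd$ integrals (the $\kd$-phase is the conjugate of the $\ku$-phase), as do the stationary values of the exponential. Matching $\smash{|\TF{\FLSg}^-_{\iref,\islow}(\frac{\xvs}{\celref\dist})|^2}$ against its appearance in $\SBj_\iref$ eliminates it in favour of $\SBj_\iref$, and after rewriting $(\om-\omref)^2\om^4/\omref^4=\omref^2(1-\frac{\om}{\omref})^2(\frac{\om}{\omref})^4$ one recovers \eref{eq:PSDfinal}. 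The main obstacle is bookkeeping: keeping the two-variable stationary-phase expansion consistent with the $\po(1)$ frequency approximation so that the $\kx$-integral genuinely decouples, and making the numerous powers of $2\pi$, $\rref$, $\compref$, $\celref$ and the Hessian determinants collapse into the compact prefactor of \eref{eq:PSDfinal}; the conceptual content is simply that the incident and scattered fields are focused through the same ray geometry.
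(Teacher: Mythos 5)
Your proposal is correct and takes essentially the same route as the paper: stationary-phase evaluation of \eref{eq:PSD0} and of \eref{eq:PSD01-bis} at the stationary points of \Lref{lem:ksp12}, with the assumption $1-\frac{\om}{\omref}=\po(1)$ used to collapse the arguments of the forcing terms so that $\normu{\TF{\FLSg}^-_{\iref,\islow}}^2$ factors out (yielding $\SBj_\iref$) and the residual $\kx$-integral over the propagating region becomes $\SBj_\iscat$. The only cosmetic difference is that you apply two $2$-dimensional stationary-phase expansions, exploiting the additive separability of the phase and cancelling the signature phases by conjugation, where the paper applies a single $4$-dimensional expansion with block-diagonal Hessian and $n^*=2$ positive eigenvalues; the two bookkeepings are equivalent.
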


\begin{proof}
Expanding the phase (\ref{eq:phase0}) in a Taylor series about the small increment $\smash{\omref-\omu}$ yields:
\begin{equation}
\phase_\iref(\ku,\kd) \simeq -\frac{\omu}{\omref} \zuj \zet(\ku)+\frac{\omu}{\omref} \zdj \zet(\kd) -\frac{\omu}{\omref}\left(\ku-\kd\right)\cdot\xvs \,.
\end{equation}
We can thus identify the slow part of the phase $\smash{\phase_\islow}$ from its fast part $\smash{\phase_\ifast}$ proportional to $\omu/\omref$ as follows:
\begin{equation}
\begin{split}
\phase_\ifast(\ku,\kd) =&-\zet(\ku)\zuj+\zet(\kd)\zdj-(\ku-\kd)\cdot\xvs \,, \\
\phase_\islow(\ku,\kd) =&\left(1-\frac{\omu}{\omref}\right)\left( \zuj \zet(\ku) +\ku\cdot\xvs- \zdj \zet(\kd) -\kd\cdot\xvs\right)\,.
\end{split}
\end{equation}
Applying the stationary-phase theorem in dimension $n=4$ (see for example \cite[Eq. (14.77)]{FOU07}) to the auto-spectrum (\ref{eq:PSD01-bis}) of the first-order perturbations of the transmitted pressure/momentum fields yields:
\begin{multline*} \label{eq:inter:psd:p2}%2pi
\SB^{(01)}(\omu,\bzero,\zuj,\omd,\bzero,\zdj) = \dirac(\omu-\omd)\frac{\rref^4(\omu-\omref)^2\om_1^4}{16\omref^2(2\pi)^5}\frac{\iexp^{\ci(n^*-2)\frac{\pi}{2}}\iexp^{\ci\omref\phase_\ifast(\kspu,\kspd)}}{\sqrt{\norm{\det\Hessian_{\ku,\kd}\phase_\ifast(\kspu,\kspd)}}} \\
\times \iexp^{\ci\omref\phase_\islow(\kspu,\kspd)} \int\kernelturb(\omu-\omref,\kx;\omu,\kspu,\omu,\kspd)\di\kx \\
\times\TF{\FLSg}^-_{\iref,\islow}\left(\kspu\right)\cjg{\TF{\FLSg}^-_{\iref,\islow}\left(\kspd\right)}\,\tfgrvrefd(\kspu)\otimes\tfgrvrefd(\kspd)\,,
\end{multline*}
where $n^*$ is the number of positive eigenvalues of the Hessian $\smash{\Hessian_{\ku,\kd}\phase_\ifast(\kspu,\kspd)}$. The latter is block-diagonal owing to \eref{eq:grad-phi} and reads:
\begin{equation*}
\Hessian_{\ku,\kd}\phase_\ifast(\ku,\kd)=\begin{bmatrix} \frac{\zuj}{\zet(\ku)}\left(\Id_2+\frac{\ku\otimes\ku}{\zet(\ku)^2}\right) & \bzero \\ \bzero & -\frac{\zdj}{\zet(\kd)}\left(\Id_2+\frac{\kd\otimes\kd}{\zet(\kd)^2}\right) \end{bmatrix}\,,
\end{equation*}
such that\string:
\begin{equation*}
\det\Hessian_{\ku,\kd}\phase_\ifast(\ku,\kd)=\left(\frac{\zuj\zdj}{\zet(\ku)\zet(\kd)}\right)^2\left(1+\frac{\norm{\ku}^2}{\zet(\ku)^2}\right)\left(1+\frac{\norm{\kd}^2}{\zet(\kd)^2}\right)\,.
\end{equation*} 
From this we can deduce that the eigenvalues of $\smash{\Hessian_{\ku,\kd}\phase_\ifast(\ku,\kd)}$ are:
\begin{equation*}
\begin{split}
\lambda_1 &= \frac{\zuj}{\zet(\ku)}\,,\quad\lambda_2= \frac{\zuj}{\zet(\ku)}\left(1 + \frac{\norm{\ku}^2}{\zet(\ku)^2}\right)\,, \\
\lambda_3 &= -\frac{\zdj}{\zet(\kd)}\,,\quad\lambda_4= -\frac{\zdj}{\zet(\kd)}\left(1+ \frac{\norm{\kd}^2}{\zet(\kd)^2}\right)\,,
\end{split}
\end{equation*}
and therefore $n^*=2$. Besides:
\begin{equation*}
\sqrt{\norm{\det\Hessian_{\ku,\kd}\phase_\ifast(\kspu,\kspd)}}=\frac{\celref^2\distu^2\distd^2}{\zuj\zdj}
\end{equation*}
introducing $\distu=\smash{(\norm{\xvs}^2+\zuj^2)^\demi}$ and $\distd=\smash{(\norm{\xvs}^2+\zdj^2)^\demi}$, and:
\begin{equation*}
\phase_\ifast(\kspu,\kspd)=\frac{\distd-\distu}{\celref}\,.
\end{equation*}
Indeed, owing to \Lref{lem:ksp12} one also has:
\begin{equation}
\zet(\kspu)=\frac{\zj_1}{\celref\distu}\,,\quad\zet(\kspd)=\frac{\zj_2}{\celref\distd}\,.
\end{equation}
We can finally compute the mean-square Fourier transform of the first-order transmitted pressure field $\smash{\pres'^{(01)}}=\smash{\compref\qj^{(01)}}$ 
as the upper-left term of $\smash{\SB^{(01)}(\omu,\bzero,\zuj,\omd,\bzero,\zdj)}$, and that of the ballistic transmitted pressure field $\smash{\pres'^{(0)}}=\smash{\compref\qj^{(0)}}$ as the upper-left term of $\smash{\SB^{(0)}(\bzero,\zuj,\bzero,\zdj)}$. Using \eref{eq:eigenmodes} with $\bet(\kx)=1$ and \eref{eq:FLSg} with the definition (\ref{eq:S-slow}), we have:
\begin{equation*}
\TF{\FLSg}^-_{\iref,\islow}(\kspu)=\smash{\iexp^{\ci\frac{\omref}{\celref}\frac{\norm{\xvs}^2}{\distu}}\tfgrvrefd(\kspu)\cdot\TF{\FLS}_\iref(\kspu)}  \, ,
\end{equation*}
with a similar expression for $\smash{\TF{\FLSg}^-_{\iref,\islow}(\kspd)}$; see \eref{eq:Sg-slow}. We end up with:
\begin{multline*}
\TFk{\pres}'^{(0)}(\om,\bzero,\zuj)\cjg{\TFk{\pres}'^{(0)}(\omp,\bzero,\zdj)}=\delta(\om-\omref)\delta(\omp-\omref)\frac{\iexp^{\ci\frac{\omref}{\celref}(\frac{\zdj^2}{\distd}-\frac{\zuj^2}{\distu})}}{4(2\pi)^2}\frac{\rref^2\omref^2}{\distu\distd} \\
\times\varphi\left(\frac{\xvs}{\celref\distu},\frac{\zuj}{\celref\distu}\right)\cjg{\varphi\left(\frac{\xvs}{\celref\distd},\frac{\zdj}{\celref\distd}\right)}\,,
\end{multline*}
and:
\begin{multline*}%2pi
\esp{\TFk{\pres}'^{(01)}(\omu,\bzero,\zuj)\cjg{\TFk{\pres}'^{(01)}(\omd,\bzero,\zdj)}}= \delta(\omu-\omd)\frac{\iexp^{\ci\frac{\omref}{\celref}(\frac{\zdj^2}{\distd}-\frac{\zuj^2}{\distu})}}{16(2\pi)^5}\frac{\rref^4\omref^4}{\distu\distd} \\
\times\left(1-\frac{\omu}{\omref}\right)^2\left(\frac{\omu}{\omref}\right)^4\SBj(\xvs,\zuj,\zdj,\omu,\omref)\varphi\left(\frac{\xvs}{\celref\distu},\frac{\zuj}{\celref\distu}\right)\cjg{\varphi\left(\frac{\xvs}{\celref\distd},\frac{\zdj}{\celref\distd}\right)}\,,
\end{multline*}
where:
\begin{multline}
\label{eq:Sigmaw}
\SBj(\xvs,\zuj,\zdj,\om,\omref)= \iexp^{\ci\omref\phase_\islow(\kspu,\kspd)}\\
\times 
\int_{\norm{\kx}\leq\frac{1}{\celref}}\kernelturb(\om-\omref,\kx;\om,\kspu,\om,\kspd)\,\di\kx\,,
\end{multline}
and, for $\smash{\TF{\FLS}_\iref}=\smash{(\TF{\Hj}_\iref,\TF{\Fv}_{\iref\xv},\TF{\Fj}_{\iref\zj})}$:
\begin{equation*}
\varphi(\kx,\zet)=\frac{1}{\compref}\TF{\Hj}_\iref(\kx)+\kx\cdot\TF{\Fv}_{\iref\xv}(\kx)-\zet\TF{\Fj}_{\iref\zj}(\kx)\,.
\end{equation*}
We thus define:
\begin{displaymath}
\SBj_\iref(\xvs,\zj,\omref) =\frac{\rref^2\omref^2}{4(2\pi)^2\dist^2}\normu{\varphi\left(\frac{\xvs}{\celref\dist},\frac{\zj}{\celref\dist}\right)}^2
\end{displaymath}
for $\dist=\smash{(\norm{\xvs}^2+\zj^2)^\demi}$, and:
$$
\SBj_\iscat(\xvs,\zj,\omu,\omref)=\SBj(\xvs,\zj,\zj,\omu,\omref)=
\int_{\norm{\kx}\leq\frac{1}{\celref}} \kernelturb(\omu-\omref,\kx;\omu,\kspu,\omu,\kspu)\,\di\kx\,,
$$
then we obtain the claimed formulas (\ref{eq:PSD0final}) and (\ref{eq:PSDfinal}).
\end{proof}

It remains to compute the integral in Eq.~(\ref{eq:Sigmaw_iscat}).
Assuming that $\ACF(\tau)$ is diagonal, namely $\ACF(\tau)=\ACFj(\tau)\Id_3$, and denoting by $\PSDj$ the Fourier transform of $\ACFj$, one has:
\begin{multline*}
\kernelturb(\om-\omref,\kx;\om,\ksp,\om,\ksp) =  \\
\PSDj((\om-\omref)(1-\vst\cdot\kx))\cernelv(\om,\ksp,\om-\omref,\kx)\cdot\cjg{\cernelv(\om,\ksp,\om-\omref,\kx)}\,,
\end{multline*}
with $\ksp= {\xvs}/({\celref\dist})$.
These results are illustrated in the next section for a correlation function and parameters adapted from the experiments in \cite{CAN75} and analytical models in \cite{MCA16}. 

\section{Numerical example}\label{sec:num}

At first, we choose a Gaussian model for the autocorrelation function $\tau\mapsto\ACF(\tau)$ of \eref{eq:R_V}:
\begin{equation}
\ACF(\tau)=\ACFj(\tau)\Id_3, \quad \quad 
\ACFj(\tau)=\sigma_\vpertj^2\exp\left(-\pi\frac{\tau^2}{4\tau_s^2}\right)\,.
\end{equation}
Here $\tau_s$ is the correlation time, or the turbulence integral timescale in the dedicated literature, \emph{i.e.} the typical time scale of a realization of the turbulent velocity fluctuations $\vpert$ such that $\smash{\int_0^{+\infty}{\ACFj(\tau)}/{\ACFj(0)}\di\tau}=\smash{\tau_s}$; and $\smash{\sigma_\vpertj}$ quantifies their standard deviation. Then from \eref{eq:PSD_V}:
\begin{equation}%2pi
\PSDj(\om)=2\tau_s\sigma_\vpertj^2\exp\left(-\frac{1}{\pi}\tau_s^2\om^2\right)\,.
\end{equation}
In view of \eref{eq:R_V} and \eref{eq:def-flow}, the variance $\smash{\sigma_\vpertj^2}$ scales as a squared velocity--typically $\smash{\norm{\vm}^2}$ the squared mean jet velocity. In order to compare our results of \Pref{Prop:PSD:p2} with the experimental results in \cite{CAN75} and the analytical models in \emph{e.g.} \cite{MCA13,MCA16,POW11}, we plot on \fref{fg:DSP-vs-w0} and \fref{fg:DSP-vs-vt} the normalized ``power spectrum" $\smash{\SBj_2(\xvs,\zj,\omref,\om)}$ of the transmitted pressure field defined by:
\begin{equation}
\SBj_2(\xvs,\zj,\om,\omref)=\delta(\om-\omref)+\frac{\scale^2\rref^2}{4(2\pi)^3}(\om-\omref)^2\left(\frac{\om}{\omref}\right)^4\,\SBj_\iscat(\xvs,\zj,\om,\omref)
\end{equation}
in dB ($\smash{\SBj_2^\text{dB}}=\smash{10\log_{10}\SBj_2}$) for the data provided in \cite{MCA09,MCA16}: $L=0.1$ m for the thickness of the turbulent shear layer at a distance $D=0.5$ m from the jet in the experiments of Candel \emph{et al.} \cite{CAN75}, $\smash{\tau_s}=\smash{L/\norm{\vst}}$, $\scale=12\%$ for the turbulence intensity, $\norm{\vst}=0.5\norm{\vm}$ for the velocity of the turbulent eddies, and $\celref=340$ m/s and $\rref=1.2$ kg/m$^3$ for the ambient flow characteristics. More particularly, \fref{fg:DSP-vs-w0} shows $\smash{\SBj_2(\bzero,-L,\omref,\om)}$ as a function of $\smash{\Delta\om=\om-\omref}$ for various tone frequencies $\smash{f_\iref= {\omref}/({2\pi})}$ in the range [6--20] kHz and the jet velocity $\smash{U_J}=\norm{\vm}=60$ m/s, and \fref{fg:DSP-vs-vt} shows $\smash{\SBj_2(\bzero,-L,\omref,\om)}$ as a function of $\Delta\om$ for various jet velocities $\smash{U_J}$ in the range [20--60] m/s and $\smash{f_\iref}=20$ kHz. \alert{In addition, we show in these figures the analytical results displayed in \cite{MCA16} for the homogeneous axisymmetric turbulence (HAT), and the three-dimensional Gaussian homogeneous isotropic turbulence (HIT) correlation functions developed in this work. Both models are based on the general model of isotropic turbulence correlation derived in \cite{BAT53,KAR38,ROB40}. The HAT model takes explicitly into account the axisymmetry of a circular jet. Also the experiments in \cite{CAN75} are for a circular jet with a large radius, so that the plane shear layer model developed here is applicable for a source at the center of the jet}. 

The two lobes \alert{of the experimental results shown in \fref{fg:DSP-vs-w0}(a) and \fref{fg:DSP-vs-vt}(a)} are well recovered, together with their position on the frequency axis which has been observed to be independent of the tone frequency $\smash{f_\iref}$. Indeed, from the expression of $\SBj_{01}(\xvs,\zj,\omref,\om)$ in \Pref{Prop:PSD:p2} the maxima of the lobes are approximately found as the maxima of $\smash{\Delta\om^2\exp(-\frac{1}{\pi}\tau_s^2\Delta\om^2(1-\Macht)^2)}$ which yields:
\begin{displaymath}
\norm{\Delta\om_\text{max}}\simeq\sqrt{\pi}\frac{\celref}{L}\Macht\,,
\end{displaymath}
where $\Macht=\norm{\vst}/\celref\ll 1$ is the (small) Mach number for the turbulent eddies. This estimate is in good agreement with \cite{BEN16,CAN75,CAN76a,CAN76b,CLA16,MCA09,MCA13,POW11,SIJ14}. Despite several simplifications, the proposed model allows to recover the main trends of the power spectrum of the transmitted pressure field already outlined in those previous works, namely:
\begin{itemize}
\item a linear growth in the position of the maximum of the lobes as a function of the velocity of the turbulent eddies $\Macht$. Also the width of the lobes and thus the amount of scattered energy increases as well when $\Macht$ increases;
\item the energetic macro-eddies contribute to the larger part of the scattered pressure field. Spectral broadening is related to a Doppler shift due to the motion of these large structures which act as secondary sources for the scattered field.
\end{itemize}

\begin{figure}[h!]
\centering     %%% not \center
\subfigure[]{\label{1}\includegraphics[scale=0.35]{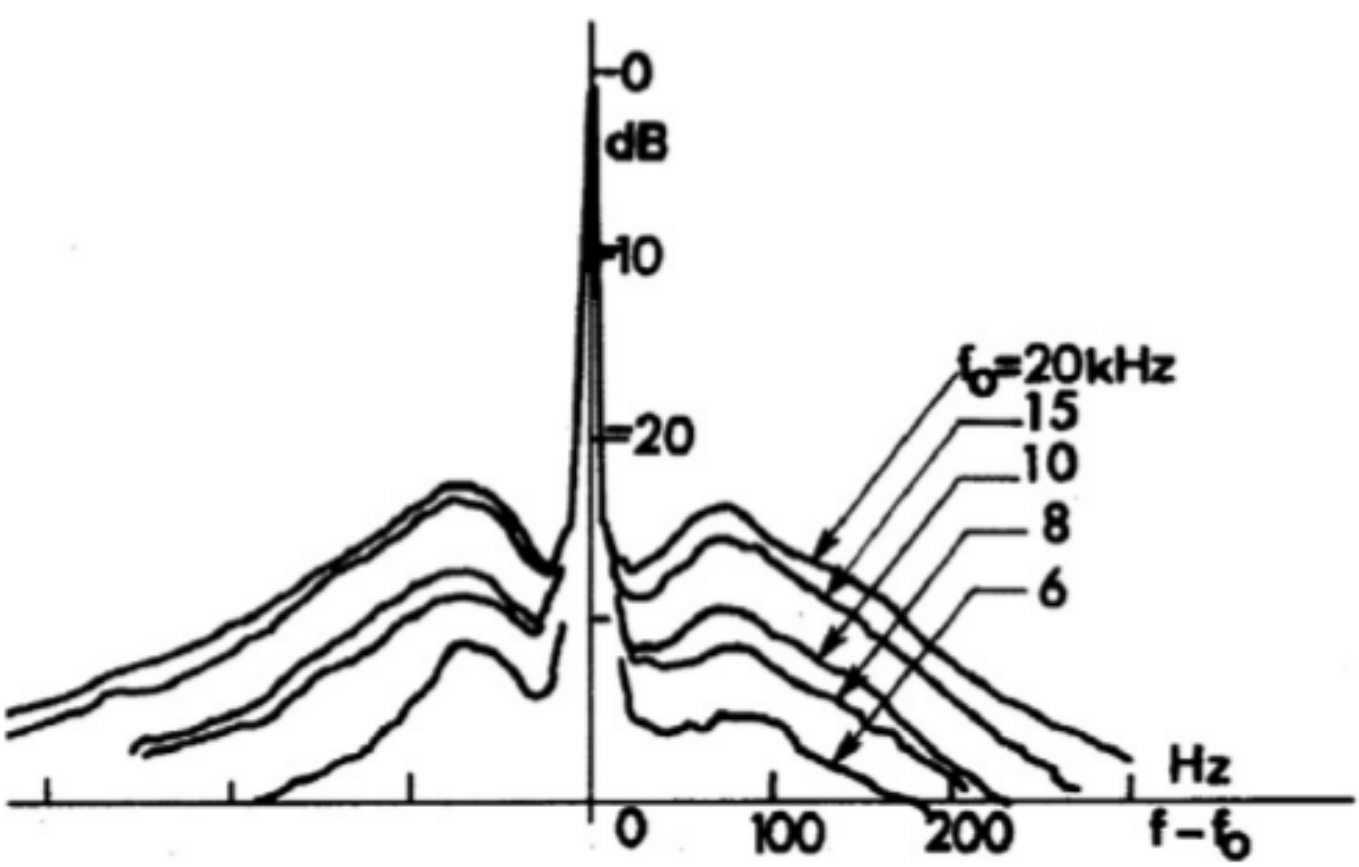}}
\subfigure[]{\label{2}\includegraphics[scale=0.27]{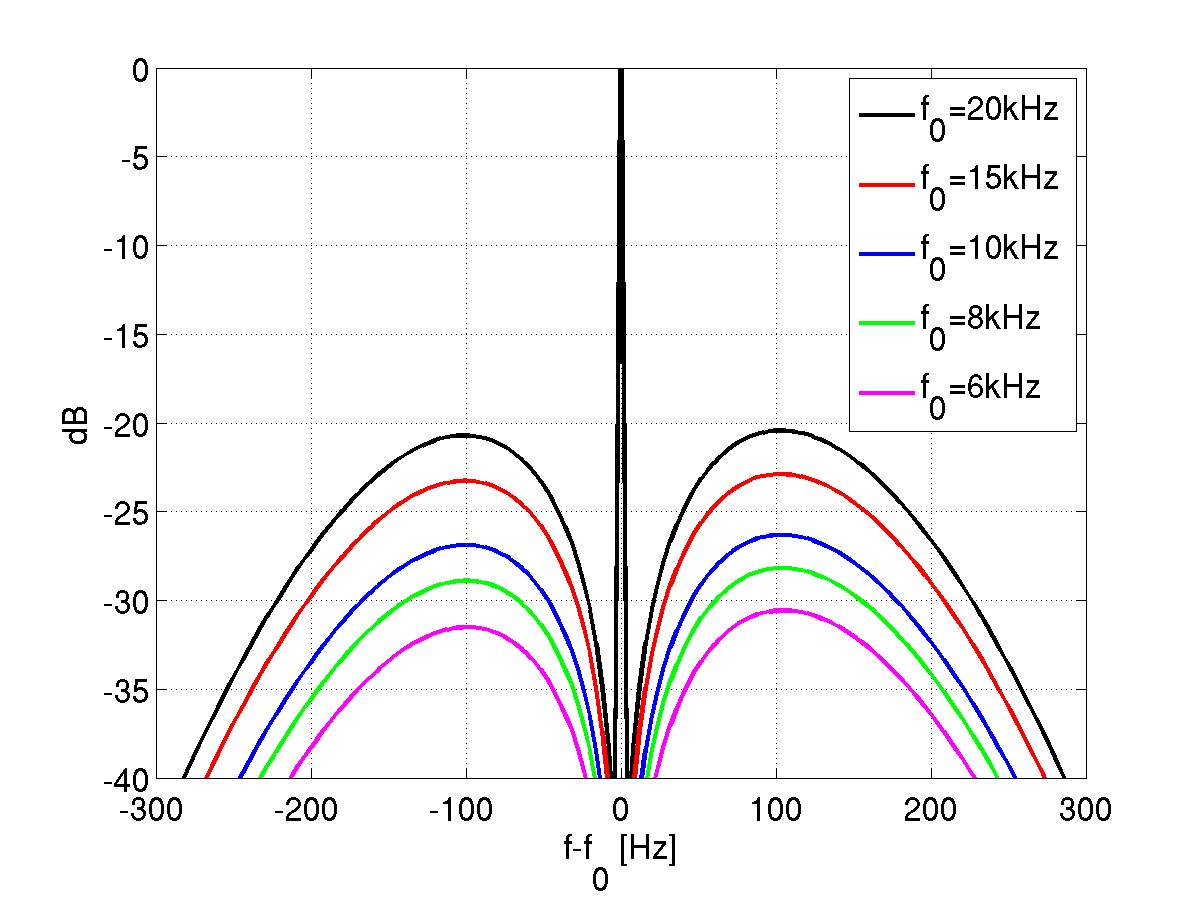}}\\
\subfigure[]{\label{3}\includegraphics[scale=0.4]{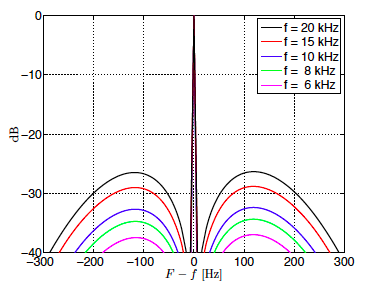}}
\subfigure[]{\label{4}\includegraphics[scale=0.4]{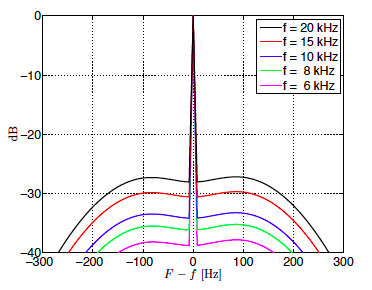}}
\caption{Normalized power spectrum of the transmitted pressure field as a function of the frequency gap $\smash{f-f_\iref}$ for various tone frequencies $\smash{f_\iref}$ in the range [6--20] kHz and jet velocity $\smash{U_J}=\norm{\vm}=60$ m/s: (a) experimental observations from \cite{CAN75}; (b) the results of \Pref{Prop:PSD:p2}; (c) the model of \cite{MCA16} for a Gaussian homogeneous axisymmetric turbulence (HAT) correlation function of the fluctuations of the ambient flow velocity; (d) the model of \cite{MCA16} for a Gaussian homogeneous isotropic turbulence (HIT) correlation function of the fluctuations of the ambient flow velocity. Note that in \cite{MCA16} the notations $\om=2\pi F$ and $\smash{\omref}=2\pi f$ are rather used.}\label{fg:DSP-vs-w0}
\end{figure}

\begin{figure}[h!]
\centering     %%% not \center
\subfigure[]{\label{11}\includegraphics[scale=0.48]{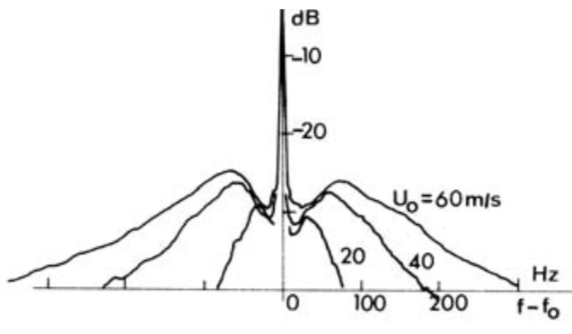}}
\subfigure[]{\label{21}\includegraphics[scale=0.27]{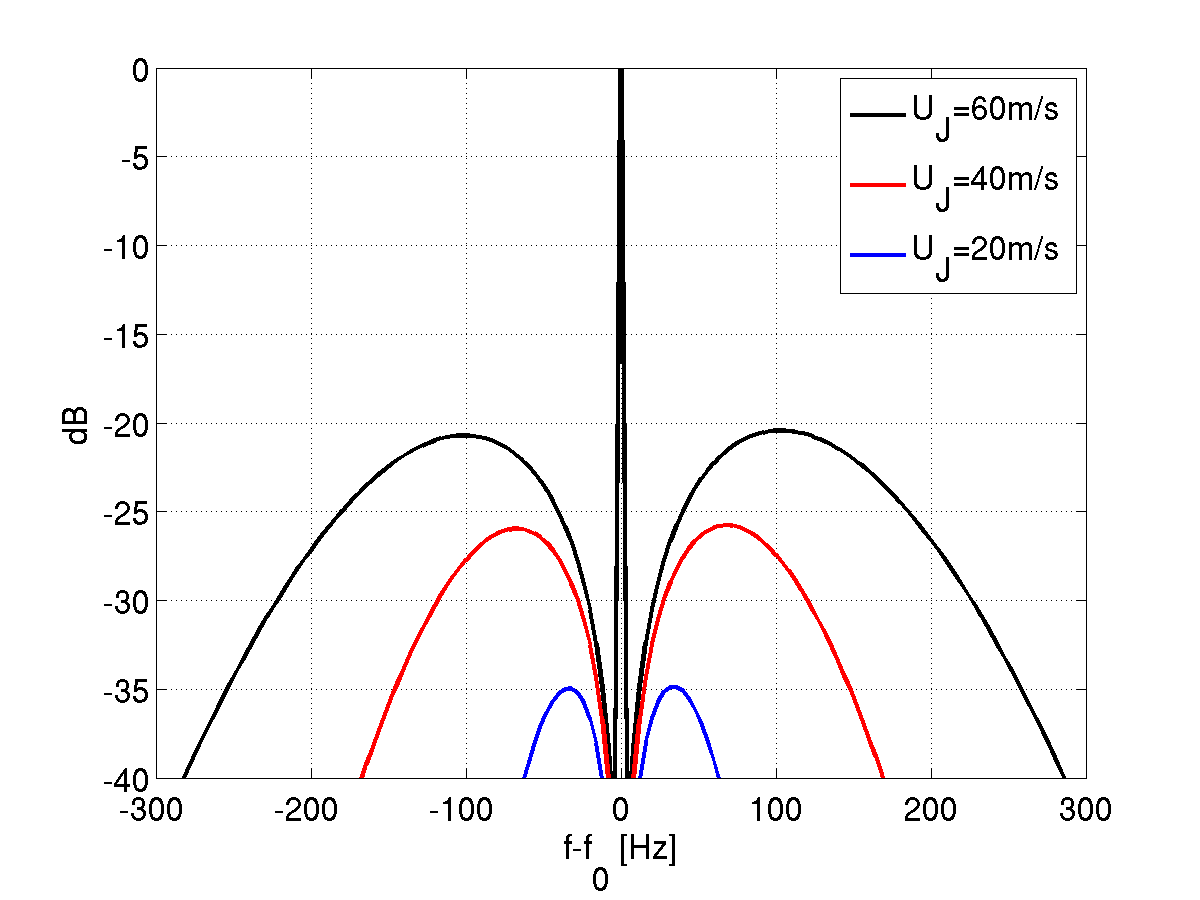}}\\
\subfigure[]{\label{31}\includegraphics[scale=0.4]{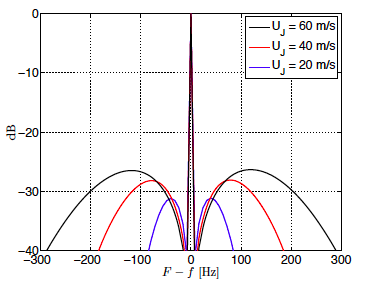}}
\subfigure[]{\label{41}\includegraphics[scale=0.4]{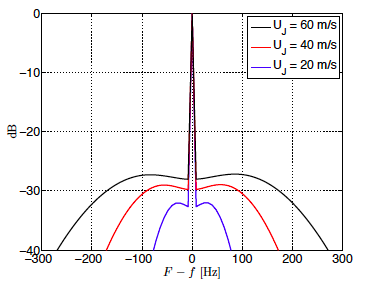}}
\caption{Normalized power spectrum of the transmitted pressure field as a function of the frequency gap $\smash{f-f_\iref}$ for various  jet velocities $\smash{U_J}=\norm{\vm}$ in the range [20-60] m/s and tone frequency $\smash{f_\iref}=20$ kHz: (a) experimental observations from \cite{CAN75}; (b) the results of \Pref{Prop:PSD:p2}; (c) the model of \cite{MCA16} for a Gaussian-HAT correlation function of the fluctuations of the ambient flow velocity; (d) the model of \cite{MCA16} for a Gaussian-HIT correlation function of the fluctuations of the ambient flow velocity. Note that in \cite{MCA16} the notations $\om=2\pi F$ and $\smash{\omref}=2\pi f$ are rather used.}\label{fg:DSP-vs-vt}
\end{figure}

\section{Summary \alert{and outlook}}\label{sec:summary}

In this paper we have developed an analytical model of the power spectral density of the acoustic waves transmitted by a plane turbulent shear layer of which ambient velocity is randomly perturbed by weak spatial and temporal fluctuations, when a time-harmonic source acts above it. The analysis starts from the linearized Euler's equations written as a Lippmann-Schwinger equation, considering that the fluctuations of the ambient flow velocity act as secondary sources for the transmitted acoustic waves.  A Born-like approximate solution of the Lippmann-Schwinger equation has been derived to work out a first-order model of the pressure field transmitted by the shear layer. In this model, the transmitted waves are constituted by their unperturbed components formed by the waves emitted by the source which have not been scattered by the ambient velocity fluctuations, and their perturbed components formed by the waves which have typically been scattered once by those fluctuations. These scattered waves are of particular interest since they have been characterized by their PSD in the experiments reported in \cite{CAN75,CAN76a,CAN76b}, which primarily motivated this work. \alert{They also motivated the linearization of Euler equations we have applied since it identifies the acoustic components of interest}. Using various assumptions for the ambient flow (thin layer) and the (high-frequency) source, and a stationary-phase argument, a model for the PSD of the scattered waves transmitted by the shear layer has been derived. It exhibits the main properties observed for the experimental PSD in \cite{CAN75,CAN76a,CAN76b,KRO13,SIJ14}, and for the numerically simulated PSD and alternative analytical models in \cite{BEN16,CLA16,EWE08,MCA09,MCA13,MCA16,POW11,SIJ14}.

The PSD of the pressure field transmitted by the shear layer shows a characteristic spectral broadening effect, whereby a reduction of the main peak at the source tone frequency in favor of more distributed spectral humps on both sides of the former, is observed. The main peak arises from the unperturbed transmitted pressure, and sidebands (lobes) arise in connection with a Doppler shift effect due to the motion of the turbulent eddies acting as secondary sources for the scattered transmitted pressure. A linear widening of these lobes with the convection velocity of the turbulent eddies has been observed, as well as the independence of the location of their maxima with respect to the tone frequency. Increasing the latter also leads to a widening of the sidebands and higher scattered levels. The proposed analysis has used a delta-correlated model of the turbulent velocity spatial fluctuations and a Gaussian model of its temporal fluctuations, though it could be improved by considering correlation models pertaining to homogeneous isotropic turbulence (HIT) or homogeneous axisymmetric turbulence (HAT) as done in \cite{MCA16}. \alert{Specifically, the general model of isotropic turbulence correlation derived in \cite{BAT53,KAR38,ROB40} could be considered in future works to possibly better match experimental observations}. However the model retained here has been able to predict the main features outlined above.

\alert{Next steps to investigate to improve our PSD model predictions could be to relax some hypotheses such as the thickness of the layer, which is assumed to be infinitely thin. Horizontally stratified flows have been considered here, but axisymmetric flow geometries have relevance to aerospace applications as well. The source-microphone axis was perpendicular to the flow in our analysis. Hence the influence of the angle of illumination could be studied as in \cite{CAM78b,CLA16,MCA13,POW11}. Besides, measurements and models of the cross power spectra between two microphones have been reported in \cite{CAN75,CAN76a,CAN76b,GUE85}, so that our model could also be compared to these results. Lastly, the velocity of the flow was assumed to be much smaller than the speed of sound. The influence of a significant Mach number could be explored further.}

%\section*{Acknowlegdment}

\bibliographystyle{plain}

\end{document}